\newtheorem{lemma}{Lemma}
\newtheorem{claim}{Claim}
\newtheorem{theorem}{Theorem}
\newtheorem{remark}{Remark}
\DeclareMathOperator{\Tr}{Tr}
\DeclareMathOperator{\Id}{Id}
\begin{document}

\title{Infinite-Horizon Linear-Quadratic-Gaussian Control with Costly Measurements}


\author{Yunhan Huang$^{1}$ and Quanyan Zhu$^{1}$
\thanks{$^{1}$ Y. Huang and Q. Zhu are with the Department of Electrical and Computer Engineering,
        New York University, 370 Jay St., Brooklyn, NY.
        {\tt\small \{yh.huang, qz494\}@nyu.edu}}
}

\markboth{ArXiv Version}%
{Huang \MakeLowercase{\textit{et al.}}: A draft}
%



\maketitle

\begin{abstract}
In this paper, we consider an infinite horizon Linear-Quadratic-Gaussian control problem with controlled and costly measurements. A control strategy and a measurement strategy are co-designed to optimize the trade-off among control performance, actuating costs, and measurement costs. We address the co-design and co-optimization problem by establishing a dynamic programming equation with controlled lookahead. By leveraging the dynamic programming equation, we fully characterize the optimal control strategy and the measurement strategy analytically. The optimal control is linear in the state estimate that depends on the measurement strategy. We prove that the optimal measurement strategy is independent of the measured state and is periodic. And the optimal period length is determined by the cost of measurements and system parameters. We demonstrate the potential application of the co-design and co-optimization problem in an optimal self-triggered control paradigm.  Two examples are provided to show the effectiveness of the optimal measurement strategy in reducing the overhead of measurements while keeping the system performance.
\end{abstract}


%
\IEEEpeerreviewmaketitle

\section{Introduction}
%
%
%
%
Traditional approaches to networked control systems assume the consistent availability of cost-free measurements \cite{zhivoglyadov2003networked}. Feedback control strategies are studied and designed to minimize specific cost criteria, e.g., actuating costs and the cost of deviation from the desired system state. Feedback control strategies are usually designed as a function of an estimate of the system state. The estimate is updated based on the consecutive measurements of the system outputs. The control performance relies heavily on the estimation quality, and the latter hinges on the availability and the quality of measurements. 

However, control applications in certain areas, e.g., the Internet of Things (IoT) and Battlefield Things (IoBT), may introduce a non-negligible cost of measurements. The overhead of measurements is mainly generated by 1). the price of sensing, which includes monetary expense such as power consumption and strategic cost such as stealth considerations. For example, a radar measurement can easily lead to megawatts of power usage and the exposure of the measurer to the target, and 2) the cost of communication. The cost of communication can be prohibitive for long-distance remote control tasks such as control of spacecraft and control of unmanned combat aerial vehicles. With the concern about the measurement cost raised, it is natural to ask ourselves the following questions: Can we measure less to balance the trade-off between the control performance and the cost of measurements. Hence, the high cost of measurements invokes the need for an effective and efficient measurement strategy co-designed with the control strategies to co-optimize the control performance, the cost of control, and the cost of measurement. 

Motivated by this need, we consider the co-design of the control and the measurement strategies of a linear system with additive white Gaussian noise to co-optimize a specific cost criterion over an infinite-horizon. The cost includes the traditional cost criterion in Linear-Quadratic-Gaussian (LQG) control plus the cost of measurements. The cost of an individual measurement is quantified by a time-invariant real-valued scalar $O\geq 0$. At each step, the measurement strategy provides guidelines on whether to measure based on current information at the controller's disposal. A measurement made will induce a cost quantified by $O$. If no measurement is made, there is no cost.  Control applications incorporated with Sensing-as-a-Services (SaaSs) and Communicating-as-a-Service (CaaSs) can also be framed into the binary measurement decision and the cost setting. For example, when a third party provides SaaSs with a pay-as-you-go pricing model, every time a measurement is made, a cost $O$ is paid to the third party. Here, the cost $O$ can be the price the controller pays for each sensing. The control strategy is co-designed with the measurement strategy, and controls are generated based on the measurements received.

\subsection{Related Works} The consideration of limiting the number of measurements is not new \cite{kushner1964optimum,meier1967optimal,tanaka1981suboptimal,gao2018optimal,imer2005optimal,ahmadi2018stochastic,rabi2006multiple}. Harold J. Kushner study a scalar linear-quadratic control problem when only a given number of measurements is allowed over a finite horizon \cite{kushner1964optimum}. Lewis Meier et al. generalizes the idea of \cite{kushner1964optimum} and consider the control of measurement subsystems to decide when and what to measure in a finite horizon LQG control \cite{meier1967optimal}. The idea of a limiting the number of measurements is also extended to optimal estimation problems \cite{gao2018optimal,imer2005optimal}, stochastic games \cite{ahmadi2018stochastic} and continuous-time settings \cite{rabi2006multiple}. However, instead of imposing a hard constraint on the number of measurements allowed, our work applies a soft penalty on the measurements made and study an infinite-horizon problem.

Another type of related works focuses on optimal sensor selection, where a specific combination of sensors is associated with a certain cost.  References include but is not limited \cite{athans1972determination,ko2007lqg,wu2008optimal,tzoumas2020lqg}. Readers can refer to \cite{tzoumas2020lqg} for a complete list of literature in this category. Sensor selections are either made beforehand and fixed or subject to change at each time step. The selections will decide what the controller can observe at each step. However, our work studies the decision making of when to observe instead of what to observe. Also, different from \cite{tzoumas2020lqg} where the authors study the optimal control subject to a constrained sensing budget or the optimal sensing subject to control performance constraints, we consider a co-design and co-optimization problem where the control strategy and the measurement strategy are co-designed to optimize the control performance, the control cost and the measurement cost.

The references closest to our work are  \cite{cooper1971optimal,longman1983optimal,molin2009lqg,huang2019continuous,maity2017linear, huang2020cross}.  In 70-80s, Carl Cooper et al., inspired by \cite{kushner1964optimum}, consider co-optimize the conventional cost in LQG control plus measurement costs in a finite-horizon \cite{cooper1971optimal,longman1983optimal}. The measurement cost is induced each time when a measurement is completed. \cite{molin2009lqg} solves the same problem in the networked control systems context. In \cite{cooper1971optimal,longman1983optimal,molin2009lqg}, the optimal measurement strategy can only be computed numerically based on a dynamic programming equation. Different from them, our work solves an infinite-horizon problem where both the optimal control strategy and the optimal measurement strategy are fully characterized analytically. More recently,  \cite{huang2019continuous} considers the problem of costly measurement on a continuous-time Markov Decision Process (MDP) setting. However, \cite{huang2019continuous} only establishes a dynamic programming theorem, and the characterization of optimal measurement strategy can only be carried out numerically. The consideration of costly information is also studied in finite-horizon dynamic games \cite{maity2017linear,huang2020cross}. \cite{maity2017linear} studies a two-person general sum LQG game where both players are subject to additional costs of measurements. A perfect measurement is sent to both players only when both players simultaneously choose to measure. In \cite{huang2020cross}, the authors consider a two-person zero-sum LQG game to model a cross-layer attack in an adversarial setting, where the controller chooses whether to measure, and the attacker chooses whether to jam. The actions of jamming and measuring generate costs to both players.

\subsection{Contributions}

We address a co-design and co-optimization problem of control and measurement concerning control costs and measurement costs in an infinite-horizon LQG context. The problem extends LQG control to the cases where, besides designing a control strategy and an estimator,  the controller has to decide when to measure to compensate for the overhead of measurements. The controller, consisting of a control strategy and a measurement strategy, results in a more economical control system in applications where the overhead of measurements is non-negligible. The framework also facilitates the incorporation of SaaSs and CaaSs into control systems and provides an economically efficient controller therein.

To solve the proposed co-design and co-optimization LQG problem. We first leverage an equivalent formulation with different strategy spaces in which the policies can be represented by each other and produce equal costs. We then propose a dynamic programming (DP) equation with controlled lookahead to serve as a theoretical underpinning for us to attain an optimal control strategy and an optimal measurement strategy. In \cite{cooper1971optimal,longman1983optimal,molin2009lqg}, the authors study a finite-horizon problem, and the measurement decisions need to be computed numerically beforehand. Unlike \cite{cooper1971optimal,longman1983optimal,molin2009lqg}, our work characterizes an optimal measurement strategy analytically and provides an online implementation of the derived optimal strategy. 

First, we establish the Bellman equation, which we call a dynamic programming equation with controlled lookahead. Using the Bellman equation, we show that the optimal control strategy is an open-loop strategy between two measurements. We treat the current measured state as an initial condition in each open-loop problem. The open-loop optimal control whose duration is decided by the measurement strategy is nested in a closed-loop system. We then show that the optimal measurement strategy is independent of the current measured state and can be found by solving a fixed-point equation that involves a combinatorial optimization problem. The optimal measurement strategy turns out to be periodic, and the period length is determined only by system parameters and the measurement cost. Besides, we also show how a linear-quadratic self-triggered problem \cite{gommans2014self} can be framed into the proposed dynamic programming equation with controlled lookahead. 

\textbf{Organization of the rest of the paper.} \Cref{Sec:Formulation} presents the formulation of the infinite-horizon LQG control and measurement co-design and co-optimization problem. In \Cref{Sec:TheoreicalAnalysis}, we provide the theoretical results of this paper, including the equivalent formulation, the dynamic programming equation with controlled lookahead, and the characterization of optimal strategies. \Cref{Sec: Experiements} contains two examples that help demonstrate the co-design and co-optimization problem.

\subsection{Notation}
Given any matrix $M \in \mathbb{R}^{q\times p}$, $M'$ means the transpose of the matrix $M$. When a matrix $M$ is positive semi-definite, we say $M>=0$. When a matrix is positive definite, we say $M>0$. Here, $\mathbb{R}$ is the space of real numbers and $\mathbb{N}$ is the set of natural numbers. For any given two matrices $M_1,M_2$ with the same dimension, $M_1 \geq M_2$ if $M_1 - M_2\geq 0$. For any given squared matrix $M$, $\Tr(M)$ means the trace of $M$. The identity matrix is written as $\Id$. Suppose there is a sequence of vectors $v_k$ for $k=0,2,3,\cdots K-1$, $u_{0:K-1} \coloneqq (u_0,u_1,\cdots,u_{K-1})$. Given a set $U$, $\times_k U$ means the k-ary Cartesian power of a set $U$, i.e., $\times_k U \coloneqq\underbrace{ U \times U \times \cdots \times U }_{k}$. 

\section{Formulation}\label{Sec:Formulation}
In the discrete-time Gauss-Markov setting, we consider the following linear dynamics of the state $x_t$:
\begin{equation}\label{Eq:SystemDynamics}
\begin{aligned}
    x_{t+1} &= A x_t + B u_t + Cw_t,\\
    y_t &= i_n x_n,
\end{aligned}
\end{equation}
where $x_{t} \in \mathcal{X} = \mathbb{R}^q$ is the state at time $t$, and $u_t \in \mathcal{U} = \mathbb{R}^p$, with dimension $p$ lower than or equal to $q$, is the control at time $t$. Here, $w_t$ is the Gaussian noise with zero mean and $\mathbb{E}[w_t w_\tau']= \Sigma_s \delta_{t-s}$, where $\delta_t$ is the Kronecker delta. We have the standard assumption that $C'\Sigma_S C$ is positive definite. That is to say system noises are linearly independent. The matrices $A$, $B$ and $C$ are real-valued with proper dimension. The measurement decision at time $t$ is denoted by $i_t\in \{0,1\}$, which be called the measurement indicator. A meaningful measurement $y_n = x_n$ is made only when $i_n$ is one. The initial condition $x_0$ is assumed to be known by the controller. 

The cost functional associated with \Cref{Eq:SystemDynamics} is given as 
\begin{equation}\label{Eq:CostFunctional}
    F(\pi;x) = \mathbb{E} \left[ \sum_{t=0}^\infty \beta^t( x_t'Q x_t + u_t'Ru_t + i_t O) \middle\vert x_0 = x\right],
\end{equation}
where we assume that $Q \equiv Q'$ is positive semi-definite, $R\equiv R'$ is positive definite and both $Q$ and $R$ are with proper dimension. Here, $O\in \mathbb{R}^+$ is the nonnegative cost of measurement, $\beta<1$ is the discount factor, and $\pi$ is a notation for the strategy that will be defined shortly. We introduce the notation to denote the history of variables
\begin{equation}\label{Eq:InformationStructure}
I_t = \{i_0,\dots,i_t\},\;\;\; U_t = \{u_0,\dots,u_t\},\;\;\;  Y_t =\{y_0,\dots,y_t\}.
\end{equation}
We define $\mathcal{F}_t =\{ I_{t-1}, U_{t-1}, Y_{t-1}, x_0\}$ and $\bar{\mathcal{F}}_t = \{\mathcal{F}_t, i_t, y_t\}$ as the information available to the controller at time $t$ before and after a measurement decision is made. The measurement decision is made based on $\mathcal{F}_{t}$ and the control is decided based on $\bar{\mathcal{F}_t}$. Hence, our objective is to find the stationary strategy $\pi = (\mu,\nu)$ that generates a sequence of measurement decisions $\{i_t = \mu(\mathcal{F}_t),t=0,1,\cdots\}$ and a sequence of controls $\{u_t = \nu(\bar{\mathcal{F}}_t),t=0,1,\cdots\}$ to minimize \Cref{Eq:CostFunctional}. We define $\Pi$ as the space of all such strategies. In this formulation, i.e., the formulation defined by \Cref{Eq:SystemDynamics,Eq:CostFunctional}, the controller decides whether to measure at every time step. In next section, we propose an equivalent formulation that facilitates the process of finding an optimal measurement strategy and a control strategy.


\section{Theoretical Analysis}\label{Sec:TheoreicalAnalysis}

In this section, we find the optimal strategies $\pi$ by following two steps. The first step is to formulate an equivalent representation of the original problem defined by \Cref{Eq:SystemDynamics,Eq:CostFunctional}. In the second step, we propose a dynamic programming equation with controlled lookahead based on the representation problem, which serves as a theoretical underpinning to characterize the optimal strategies. 

\subsection{An Equivalent Representation}
The representation has the following cost functional associated with \Cref{Eq:SystemDynamics}:
\begin{equation}\label{Eq:CostFunctionalRep}
\tilde{F}(\tilde{\pi};x) = \mathbb{E}\left[ \sum_{t=0}^\infty \left(x_t'Q x_t + u_t'R u_t \right) + \sum_{k=1}^\infty \beta^{\bar{T}_k} O \middle \vert x_0 =x \right],
\end{equation}
which is associated with the stationary strategy $\tilde{\pi} \in \tilde{\Pi}: \mathbb{\mathcal{X}}\rightarrow \mathbb{N}\times \mathcal{U} \times \mathcal{U} \times \cdots$. Here, $t$ is the index of time steps and $k$ is a counter of the number of measurements.  Basically, at time $t$ when a measurement is made, a strategy $\tilde{\pi}$ prescribes a waiting time for next measurement $T$ and a sequence of controls between two observation epochs $(u_t,u_{t+1},\dots,u_{t+T-1})$ based on current observation $x_t$. That is $(T,u_{t},\cdots,u_{t+T-1}) = \tilde{\pi}(x_t)$. To facilitate discussion, $T_k$ is denoted as the waiting time before the $k$th measurement. In \Cref{Eq:CostFunctionalRep}, $\bar{T}_k$ is the time instance of the $k$th measurement defined as $\bar{T}_k = \sum_{i \leq k} T_i$ and $\bar{T}_0=0$. That is at $t=\bar{T}_k$, the $k$th measurement is made. Since $x_0$ is known to the controller, the first measurements happens at time $\bar{T}_1 = T_1$. To facilitate the readers, corresponds between $T_k$, $\bar{T}_k$ and the measurement indicators $I_{t}$ defined in \Cref{Eq:InformationStructure}, are illustrated in \Cref{fig:MeasurementTimeIllustration}. Next, we show, using \Cref{Lemma:EquivalentRepresentation}, that by finding an optimal strategy $\tilde{\pi}^*\in\tilde{\Pi}$ of the problem defined by \Cref{Eq:CostFunctionalRep}, we can find an optimal strategy $\pi^*\in \Pi$ of the problem defined by \Cref{Eq:SystemDynamics}.

\begin{figure}[h]
    \centering
    \includegraphics[width=0.9\columnwidth]{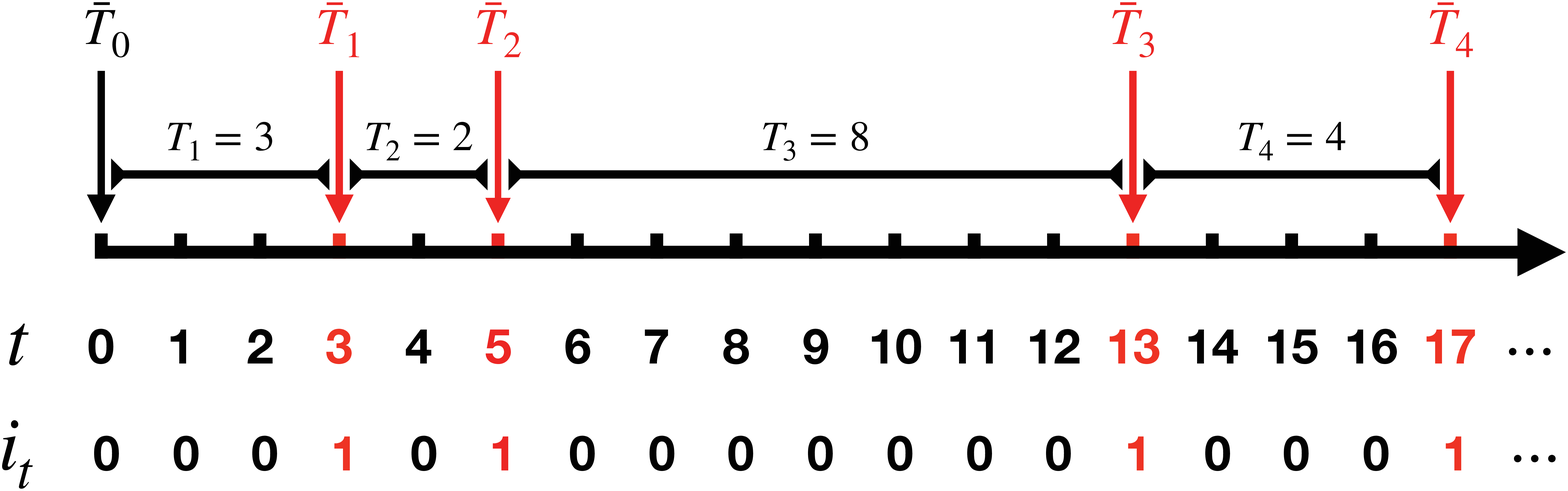}
    \caption{An illustration of the relations between $I_t$, the measurement indicators, and $\bar{T}_{k-1},T_{k}$, the time instance of the $k-1$th measurement and the waiting time for the $k$th measurement.}
    \label{fig:MeasurementTimeIllustration}
\end{figure}

\begin{lemma}\label{Lemma:EquivalentRepresentation}
The infinite-horizon LQG control problem with costly measurements defined by \Cref{Eq:CostFunctional} associated with strategy $\pi\in\Pi$ can be equivalently represented by the optimal control problem defined by $\Cref{Eq:CostFunctionalRep}$ associated with strategy $\tilde{\pi}\in \tilde{\Pi}$. That is every strategy $\pi\in \Pi$ can be represented by a strategy $\tilde{\pi}$ in $\tilde{\Pi}$ (See Section 5.6 of \cite{bacsar1998dynamic} for representations of strategies) and they both produce the same cost, and vice versa;
\end{lemma}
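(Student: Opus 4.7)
The plan is to establish, for each $\pi \in \Pi$, a strategy $\tilde{\pi} \in \tilde{\Pi}$ (and conversely) whose sample paths generate exactly the same sequences $(x_t,u_t,i_t)_{t\ge 0}$, and then match the two cost expressions term by term. The key structural observation is that between two consecutive measurements no new information enters the controller's filtration, so the policy effectively reduces to an open-loop block indexed by the most recent measured state.

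First, I would build the forward map $\pi \mapsto \tilde{\pi}$. Fix a stationary $\pi=(\mu,\nu)$ and, for any measurement epoch $t$ with measured state $x_t$, note that the information sets $\mathcal{F}_\tau$ and $\bar{\mathcal{F}}_\tau$ for $\tau$ between $t$ and the next measurement are extended only by trivial entries ($i_\tau=0$ and $y_\tau=\emptyset$ for $\tau>t$). By stationarity of $\pi$, the decisions $i_{t+1}, i_{t+2},\ldots$ and controls $u_t,u_{t+1},\ldots$ prescribed by $\pi$ are therefore deterministic functions of $x_t$ alone up to the next measurement. Letting $T$ be the first positive integer with $i_{t+T}=1$, I set $\tilde{\pi}(x_t) = (T, u_t, u_{t+1}, \ldots, u_{t+T-1})$ using the controls produced by $\nu$ along this open-loop stretch. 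By construction, applying $\tilde{\pi}$ starting from $x_t$ reproduces the trajectory that $\pi$ would have generated.

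The backward map $\tilde{\pi}\mapsto \pi$ is more direct. Starting from the known $x_0$, I apply $\tilde{\pi}(x_0)=(T_1,u_0,\ldots,u_{T_1-1})$, assign $i_0=\cdots=i_{T_1-1}=0$ and $i_{T_1}=1$, propagate the dynamics to obtain $x_{T_1}$, and then iterate with $\tilde{\pi}(x_{T_1})$. This recursively defines measurable maps $\mu(\mathcal{F}_t)=i_t$ and $\nu(\bar{\mathcal{F}}_t)=u_t$ that depend only on the most recent measured state, hence $\pi\in\Pi$ and is stationary. Finally, I would check cost equivalence. Under the coupled construction, the processes $(x_t,u_t,i_t)$ agree pathwise, so the state-control running cost contributes the same quantity to $F$ and $\tilde{F}$. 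The measurement cost in $F$ rearranges as $\sum_{t=0}^\infty \beta^t i_t O = \sum_{k=1}^\infty \beta^{\bar{T}_k} O$, because $i_t=1$ exactly at the times $\bar{T}_1,\bar{T}_2,\ldots$; taking expectations yields $F(\pi;x)=\tilde{F}(\tilde{\pi};x)$.

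The main obstacle is the forward direction: one must justify that between two consecutive measurements the dependence of $(\mu,\nu)$ on $\mathcal{F}_\tau$ collapses to a dependence on the last measured state, so that the open-loop block $(u_t,\ldots,u_{t+T-1})$ is well-defined purely as a function of $x_t$. This step is what legitimizes the description of $\tilde{\Pi}$ as a strategy space, and once it is in place the remainder of the argument is essentially bookkeeping supported by the stationarity assumption, together with a reference to the policy-representation discussion in Section 5.6 of \cite{bacsar1998dynamic}.
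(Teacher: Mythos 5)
Your proposal is correct and follows essentially the same route as the paper's proof: you construct the two representation maps between $\Pi$ and $\tilde{\Pi}$, invoke the Markov property / absence of new information between measurement epochs to collapse the dependence of the controls onto the last measured state, and match costs via the same rearrangement $\sum_{t=0}^\infty \beta^t i_t O = \sum_{k=1}^\infty \beta^{\bar{T}_k} O$. The ``main obstacle'' you flag is precisely the step the paper handles with its Markovianity remark, so there is no substantive difference.
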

\begin{proof}
See \Cref{Proof:EquivalentRepresentation}.
\end{proof}

\begin{remark}
An strategy $\pi$ corresponding to \Cref{Eq:SystemDynamics} and a strategy $\tilde{\pi}$ corresponding to \Cref{Eq:CostFunctionalRep} can be interpreted as different system implementations. For $\pi$ in \Cref{Eq:SystemDynamics}, at the beginning of time $t$, 1). the controller decides whether to measure according to $\mu(\bar{\mathcal{F}}_t)$. 2). If the decision is to measure, the controller sends a request to the measurement system and receives $y_n = x_n$. Otherwise, no request is sent and no information is received by the controller. 3). Then the control command is then computed based on $\mu(\bar{\mathcal{F}}_t)$ and sent to the actuators. 4). The system then generates $x_{t+1}$. For $\tilde{\pi}$ in \Cref{Eq:CostFunctionalRep}, at $t = \bar{T}_k$, 1) the controller receives its $k$th measurement $y_{\bar{T}_k}= x_{\bar{T}_k}$ from the measurement system.  2) The controller computes the waiting time for next measurement $T_{k+1}$ and a sequence of control commands $(u_{\bar{T}_k},\cdots,u_{\bar{T}_k + T_{k+1}-1})$. 3) The waiting time $T_{k+1}$ is sent to the measurement system indicating the next time to measure and the sequence of control commands is sent to the actuator, either in one packet or in $T_k$ packets over time. 4) The actuators apply these commands and the system updates $x_{\bar{T}_{k}+1},\cdots, x_{\bar{T}_k + T_{k+1}+1}$.
\end{remark}

\subsection{Dynamic Programming Equation with Controlled Lookahead}
With \Cref{Lemma:EquivalentRepresentation}, we thus can focus on analyzing the representation problem defined by \Cref{Eq:SystemDynamics,Eq:CostFunctionalRep} and characterizing the optimal strategy $\tilde{\pi}^*$ therein. To begin with, we are interested in minimizing the cost functional over the entire space of policies taking the form $\tilde{\pi}: \mathbb{\mathcal{X}}\rightarrow \mathbb{N}\times \mathcal{U} \times \mathcal{U} \times \cdots$. The values of the infimum is defined as 
\begin{equation}\label{Eq:ValueFunction}
V(x) \coloneqq \inf_{\tilde{\pi}\in \tilde{\Pi}} \tilde{F}(\tilde{\pi};x) = \inf_{\pi\in\Pi} F(\pi;x).
\end{equation}

The following theorem shows the dynamic programming equation regarding the value functions defined in \Cref{Eq:ValueFunction}, which we call the dynamic programming equation with controlled lookahead. The proof of the theorem is based on the idea of consolidating the induced costs and the generated controls between measurement epochs and formulating an MDP problem with extended state and action spaces.  

\begin{theorem}\label{Theorem:DynamicProgramming}
The value function $V(x)$ defined by \cref{Eq:ValueFunction} satisfies the following dynamic programming equation
\begin{equation}\label{Eq:DynamicProgramming}
V(x) = \inf_{T\in\mathbb{N}} \inf_{u_{0:T-1} \in \times_T \mathcal{U}} \mathbb{E} \left[ \sum_{t=0}^{T-1} \beta^t (x_t' Q x_t + u_t' R u_t) + \beta^T V(x_T) + \beta^T O \middle \vert x_0 =x \right].
\end{equation}
If there exists a strategy $\tilde{\pi}^*(x) = (T^*,u_0^*,\cdots,u_{T-1}^*)$ such that 
$$
V(x) = \mathbb{E} \left[ \sum_{t=0}^{T^*-1} \beta^t (x_t' Q x_t + {u_t^*}' R u^*_t) + \beta^{T^*} V(x_{T^*}) + \beta^{T^*} O \middle \vert x_0 =x \right],
$$
for all $x\in\mathcal{X}$, then $\tilde{\pi}^*$ is the optimal strategy.
\end{theorem}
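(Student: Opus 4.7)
The plan is to establish the Bellman equation \eqref{Eq:DynamicProgramming} by the standard two-sided argument, exploiting the renewal structure that a measurement at time $T$ resets the controller's information to the observed state $x_T$. I would define the operator
\[
(\mathcal{T} V)(x) \coloneqq \inf_{T\in\mathbb{N}}\inf_{u_{0:T-1}\in \times_T\mathcal{U}} \mathbb{E}\!\left[\sum_{t=0}^{T-1}\beta^t(x_t'Qx_t+u_t'Ru_t)+\beta^T V(x_T)+\beta^T O \,\middle\vert\, x_0=x\right]
\]
and then show $V=\mathcal{T}V$ by proving both inequalities separately. Throughout, \Cref{Lemma:EquivalentRepresentation} allows me to work exclusively with the representation $\tilde{\pi}\in\tilde{\Pi}$, whose action at a measurement epoch already has the required form $(T,u_{0},\dots,u_{T-1})$.

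For the inequality $V(x)\le (\mathcal{T}V)(x)$, I fix any $T\in\mathbb{N}$, any open-loop sequence $u_{0:T-1}$, and any $\varepsilon>0$. By the definition of $V$ as an infimum, for each realization $x_T=\xi$ I can pick a strategy $\tilde{\pi}_\xi\in\tilde{\Pi}$ with $\tilde{F}(\tilde{\pi}_\xi;\xi)\le V(\xi)+\varepsilon$. Concatenating $(T,u_{0:T-1})$ with $\tilde{\pi}_{x_T}$ yields a feasible strategy in $\tilde{\Pi}$ whose cost, by the strong Markov / time-shift property together with the geometric factor $\beta^{T}$ that the discount contributes to every term arising after the first measurement epoch, equals the bracketed expression on the right-hand side of \eqref{Eq:DynamicProgramming} up to $\beta^{T}\varepsilon$. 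Taking $\varepsilon\downarrow 0$ and then infimizing over $(T,u_{0:T-1})$ gives the bound. The reverse inequality $V(x)\ge(\mathcal{T}V)(x)$ is symmetric: any $\tilde{\pi}\in\tilde{\Pi}$ decomposes at the first measurement epoch into a triple $(T_1,u_{0:T_1-1})$ and a continuation strategy $\tilde{\pi}_1$ acting on $x_{T_1}$; splitting the sum in \eqref{Eq:CostFunctionalRep} at $t=T_1$ and using $\mathbb{E}[\tilde{F}(\tilde{\pi}_1;x_{T_1})]\ge \mathbb{E}[V(x_{T_1})]$ together with the observation that the first measurement contributes exactly $\beta^{T_1}O$ gives $\tilde{F}(\tilde{\pi};x)\ge(\mathcal{T}V)(x)$, and taking the infimum over $\tilde{\pi}$ closes the loop.

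For the second statement, suppose $\tilde{\pi}^*(x)=(T^*,u_0^*,\dots,u_{T^*-1}^*)$ attains the infimum in \eqref{Eq:DynamicProgramming} pointwise in $x$. I would unroll the equation: substituting the optimality relation into itself gives
\[
V(x)=\mathbb{E}\!\left[\sum_{t=0}^{\bar T_N^{*}-1}\beta^t(x_t'Qx_t+{u_t^{*}}'Ru_t^{*})+\sum_{k=1}^{N}\beta^{\bar T_k^{*}}O+\beta^{\bar T_N^{*}}V(x_{\bar T_N^{*}})\,\middle\vert\, x_0=x\right],
\]
where $\bar T_k^*$ are the measurement epochs generated by $\tilde{\pi}^*$. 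Under the standing stabilizability assumptions of an LQG setting one expects $\beta^{\bar T_N^{*}}\mathbb{E}[V(x_{\bar T_N^{*}})]\to 0$ as $N\to\infty$ (since $V$ will inherit at most quadratic growth from the per-stage cost and $\beta<1$), so passing to the limit yields $V(x)=\tilde F(\tilde{\pi}^*;x)$, identifying $\tilde{\pi}^*$ as optimal, and by \Cref{Lemma:EquivalentRepresentation} the corresponding $\pi^*\in\Pi$ is optimal for the original problem.

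The main obstacle I anticipate is the tail estimate $\beta^{\bar T_N^{*}}\mathbb{E}[V(x_{\bar T_N^{*}})]\to 0$, because $V$ is only implicitly defined and the epochs $\bar T_k^*$ are themselves outputs of the optimal strategy. A clean route is to bound $V$ a priori by the cost of the trivial strategy that never measures after time $0$ (which is finite whenever $A$ is Schur-stable in the discounted sense, i.e., $\beta\rho(A)^2<1$, a condition implicit in \eqref{Eq:CostFunctional} being finite), giving $V(x)\le \alpha(1+\|x\|^2)$ for some constant $\alpha$; combined with the fact that the discounted second moment $\beta^t\mathbb{E}\|x_t\|^2$ stays bounded under the open-loop controls generated by $\tilde{\pi}^*$ (otherwise its cost would be infinite), the tail vanishes. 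The measure-theoretic subtleties of selecting $\tilde{\pi}_\xi$ in the $\varepsilon$-argument are handled by a standard measurable selection theorem on the Borel space $\mathcal{X}=\mathbb{R}^q$.
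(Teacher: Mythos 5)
Your proof is essentially correct, but it takes a genuinely different route from the paper. The paper does not argue the two inequalities directly: it consolidates the costs and controls between consecutive measurement epochs into a single stage cost, introduces the augmented state $Z_k=(x_{\bar T_k},\bar T_k-k)$ and action $A_k=(T_{k+1},u_{\bar T_k:\bar T_k+T_{k+1}-1})$ so that the variable discount $\beta^{\bar T_k}$ is absorbed into the state and the resulting process is a standard discounted MDP with per-step discount exactly $\beta$, and then invokes a textbook dynamic-programming theorem for discounted MDPs on Borel spaces to obtain both the Bellman equation and the verification statement in one stroke. Your two-sided concatenation/decomposition argument plus the unrolling-with-tail-estimate verification is the first-principles version of the same content; it is more self-contained and transparent about where the renewal structure is used, but it obliges you to handle explicitly the issues the cited theorem packages away: measurable selection of the $\varepsilon$-optimal continuations $\tilde\pi_\xi$ (which you flag), the fact that your concatenated strategy is not stationary so you are really computing the value over history-dependent strategies and must identify it with the stationary value (the standard discounted-MDP fact, again supplied by the paper's citation), and the vanishing of $\beta^{\bar T_N^*}\mathbb{E}[V(x_{\bar T_N^*})]$. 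On that last point, your a priori quadratic bound on $V$ via the never-measure strategy needs $\beta\rho(A)^2<1$, which the paper does not assume; it is cleaner to bound $V$ from above by the cost of the measure-every-step strategy (classical discounted LQG), which is quadratic under the controllability/observability hypotheses the paper imposes later. With that substitution your argument goes through and is a legitimate, more elementary alternative to the paper's reduction.
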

\begin{proof}
See \Cref{Proof:DynamicProgramming}.
\end{proof}

\begin{remark}
The dynamic programming involves the consolidated stage cost $\sum_{t=0}^{T-1} \beta^t (x_t' Q x_t +  u_t' R u_t)$, the cost-to-go after $T$-steps lookahead, and the cost of next measurement. Hence, the dynamic programming equation has $T$-steps lookahead and the number of steps $T$ is controlled and optimized according to the trade-off between the control performance degradation and the measurement cost. We thus refer to the dynamic programming equation in \Cref{Eq:DynamicProgramming} as the dynamic programming equation with controlled lookahead, which differs from the traditional lookahead dynamic programming equations \cite{bertsekas1995dynamic} in two ways. The first is that the number of lookahead steps is controlled. The second is that the control strategy is dependent solely on $x$ (no closed-loop state updates) and will be applied in the next $T$ steps.
\end{remark}

\subsection{The Optimal Measurement and Control Strategies}
From \Cref{Theorem:DynamicProgramming}, we know that the characterization of the optimal policy relies on solving the dynamic programming equation given in \Cref{Eq:DynamicProgramming} which is basically a fixed-point equation. The uniqueness of the value function is guaranteed by the Banach fixed-point theorem \cite{kreyszig1978introductory} using the fact that the operator defined by the right-hand side of \Cref{Eq:DynamicProgramming} is a contraction mapping. To calculate the right hand-side of \Cref{Eq:DynamicProgramming} for a given $V(x)$, one can first fix $T$ and treat the inner minimization problem in \Cref{Eq:DynamicProgramming} as an open-loop optimal control problem starting at $x_0=x$ with terminal cost $\beta^T V(x^T)$, which gives the following lemma.

\begin{lemma}\label{Lemma:InnerOptimalControl}
Suppose that $V(x) = x' P x  + r$, where $P$ is a real-valued matrix with proper dimension and $r$ is a real-valued scalar. Given any $T$, the inner optimization problem in \Cref{Eq:DynamicProgramming}
$$
\inf_{u_0,\cdots,u_{T-1}} \mathbb{E} \left[ \sum_{t=0}^{T-1} \beta^t (x_t' Q x_t + u_t' R u_t) + \beta^T x_T'Px_T + \beta^T r + \beta^T O \middle \vert x_0 =x \right]
$$
has the minimum (the optimal cost) 
$$
f_0^*(x) = x' L_{T} x + \sum_{t=0}^{T-1} \beta^t \Tr\left( P_{t}(\bar{\mathcal{F}}_t)  \varphi_{t}\right) + \sum_{t=1}^{T} \beta^t \Tr \left( \Sigma_S C L_{T-t}C\right) + \beta^T (r+O),
$$
where $L_t$ is generated by the Riccati equation
\begin{equation}\label{Eq:RiccatiEquation}
L_{t+1} = Q + \beta A'L_t A - A' L_t B \beta (R+\beta B'L_t B)^{-1}\beta B'L_t A,\ \ \ \textrm{for }t=0,\cdots,T,\ \textrm{with }L_0 =P,
\end{equation}
and $\varphi_t$ is generated according to
\begin{equation}\label{Eq:EstimationErrorCoefficient}
\varphi_t = A' L_{T-t-1}B \beta (R+\beta B' L_{T-t-1} B)^{-1}\beta B' L_{T-t-1}A,\ \ \ \textrm{for }t=0.\cdots,T-1.
\end{equation}
The corresponding minimizer (the optimal controls) is
$$
\begin{aligned}
u^*_t &= -(R + \beta B' L_{T-t-1}B)^{-1} \beta B' L_{T-t-1} A \hat{x}_t\\
\end{aligned}
$$
Here, $P_{t}(\bar{\mathcal{F}}_t) = \mathbb{E}\left[(x_t - \hat{x}_t)'(x_t - \hat{x}_t)\middle \vert \bar{\mathcal{F}}_t\right]$ the covariance of estimation error when no measurement is made from $t=1$ to $t= T-1$. And $\hat{x}_t \coloneqq \mathbb{E}\left[ x_t\middle \vert \bar{\mathcal{F}}_t \right]$ is the estimate of $x_t$. The the estimate and the covariance of estimation error evolves according to
\begin{equation}\label{Eq:EstimateCovariancePropa}
\begin{aligned}
\hat{x}_{t+1} &= A\hat{x}_t + Bu^*_t,\ \ \ \textrm{with } \hat{x}_0 =x_0,\\
P_{t+1}(\bar{\mathcal{F}}_{t+1}) &=  A' P_t(\bar{\mathcal{F}}_{t})A + C'\Sigma_S C,\ \ \ \textrm{with } P_0(\bar{\mathcal{F}}_0)=0,\ \ \ \textrm{for }t=0,\cdots,T-1.
\end{aligned}
\end{equation}
\end{lemma}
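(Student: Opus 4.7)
The plan is to reduce this stochastic open-loop control problem to a deterministic LQR on the state estimate plus additive noise-correction terms. Since no measurement arrives between times $0$ and $T-1$, each admissible $u_t$ is $\bar{\mathcal{F}}_t$-measurable with $\bar{\mathcal{F}}_t$ generated only by $x$ and $u_{0:t-1}$, so $u_t$ is effectively a deterministic function of $x$. Decomposing $x_t = \hat{x}_t + (x_t - \hat{x}_t)$ with $\hat{x}_t = \mathbb{E}[x_t\mid \bar{\mathcal{F}}_t]$, the estimate propagates deterministically as $\hat{x}_{t+1} = A\hat{x}_t + Bu_t$, $\hat{x}_0 = x$, while the estimation error is a driven linear system in $w_t$ independent of the controls, whose covariance satisfies the stated Lyapunov recursion with $P_0 = 0$.

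Next, orthogonality of $\hat{x}_t$ and the error yields $\mathbb{E}[x_t'Qx_t\mid \bar{\mathcal{F}}_t] = \hat{x}_t'Q\hat{x}_t + \Tr(QP_t)$, so the objective reorganizes into
\begin{equation*}
\underbrace{\sum_{t=0}^{T-1} \beta^t\bigl(\hat{x}_t'Q\hat{x}_t + u_t'Ru_t\bigr) + \beta^T \hat{x}_T'P\hat{x}_T}_{\text{deterministic LQR in } \hat{x}} \;+\; \sum_{t=0}^{T-1}\beta^t\Tr(QP_t) + \beta^T\Tr(PP_T) + \beta^T(r+O).
\end{equation*}
The first summand is a classical finite-horizon discounted LQR with terminal matrix $P$ and initial state $\hat{x}_0 = x$, in which the controls enter only through $\hat{x}$ (the $P_t$-terms are control-independent). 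Standard backward induction with the ansatz $W_t(\hat{x}) = \hat{x}'L_{T-t}\hat{x}$ produces the Riccati recursion~\Cref{Eq:RiccatiEquation}, optimal value $x'L_Tx$, and gain $K_t = -(R+\beta B'L_{T-t-1}B)^{-1}\beta B'L_{T-t-1}A$, hence $u_t^* = K_t\hat{x}_t$ as stated.

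It then remains to recast the noise contribution $\sum_{t=0}^{T-1}\beta^t\Tr(QP_t) + \beta^T\Tr(PP_T)$ into the form carrying $\varphi_t$. The Riccati recursion rearranges to $Q = L_{T-t} + \varphi_t - \beta A'L_{T-t-1}A$. Substituting, applying the cyclic property of the trace together with $A P_t A' = P_{t+1} - C\Sigma_S C'$ from \Cref{Eq:EstimateCovariancePropa}, and reindexing $s = t+1$ turns the $\beta A' L_{T-t-1} A$ piece into a telescoping sum in $\beta^t\Tr(L_{T-t}P_t)$ whose only surviving endpoints are $\Tr(L_T P_0) = 0$ and $-\beta^T\Tr(PP_T)$. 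The latter cancels the $\beta^T\Tr(PP_T)$ on the left, and the residue is exactly $\sum_{t=0}^{T-1}\beta^t\Tr(\varphi_t P_t) + \sum_{t=1}^T\beta^t\Tr(\Sigma_S C L_{T-t} C)$, which combined with $\beta^T(r+O)$ produces the claimed $f_0^*(x)$.

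The main obstacle is recognizing and executing this last algebraic identity: the Riccati equation must be read as the exact substitution that, after a single reindexing and one use of the Lyapunov recursion for $P_t$, splits the naive noise cost $\sum\beta^t\Tr(QP_t)$ into a process-noise piece $\sum\beta^t\Tr(\Sigma_S C L_{T-t}C)$ (which is what one would get in full-state LQG) and a ``certainty-equivalence gap'' $\sum\beta^t\Tr(\varphi_tP_t)$ that vanishes whenever $P_t = 0$. Everything else, namely the information-structure observation, the estimator/error decomposition, and the backward-induction solution of the deterministic LQR, is routine.
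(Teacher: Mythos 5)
Your proposal is correct, but it follows a genuinely different route from the paper's proof. The paper performs backward dynamic programming directly on the stochastic open-loop problem: it defines cost-to-go functionals $f_k^*$, carries out the minimization explicitly at stage $k=T-1$ (splitting $x_{T-1}$ into $\hat{x}_{T-1}$ plus estimation error inside that single stage to produce the $\Tr(P_{T-1}(\bar{\mathcal{F}}_{T-1})\varphi_{T-1})$ and $\Tr(\Sigma_S C'L_0C)$ terms), and then asserts that stages $T-2$ down to $0$ follow by the same induction. You instead separate at the outset: you peel the cost into a deterministic discounted LQR in the estimate $\hat{x}_t$ plus a control-independent noise cost $\sum_t\beta^t\Tr(QP_t)+\beta^T\Tr(PP_T)$, solve the deterministic LQR by the standard Riccati backward induction, and then convert the noise cost into the stated $\sum_t\beta^t\Tr(\varphi_tP_t)+\sum_t\beta^t\Tr(\Sigma_S C'L_{T-t}C)$ form via an Abel-summation/telescoping identity obtained by substituting $Q=L_{T-t}-\beta A'L_{T-t-1}A+\varphi_t$ and the covariance recursion; I verified that this identity closes exactly as you claim (the endpoints $\Tr(L_TP_0)=0$ and $-\beta^T\Tr(L_0P_T)$ survive, and the latter cancels the terminal noise term). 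What your approach buys is a complete argument with no deferred induction and a transparent interpretation of the $\Tr(\varphi_tP_t)$ terms as the certainty-equivalence gap that vanishes when the error covariance is zero; what the paper's approach buys is a single uniform induction that produces the value function, the gains, and all trace terms simultaneously without needing the separate algebraic identity. Both treatments share the paper's minor looseness about transposes on $C$ and on the covariance propagation ($A'P_tA$ versus $AP_tA'$), which does not affect the scalar conclusions.
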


\begin{proof}
See \Cref{Proof:InnerOptimalControl}.
\end{proof}

From \Cref{Lemma:InnerOptimalControl}, we know that if the value function takes the form of $x'P x' +r$, the dynamic programming equation with controlled lookahead, a.k.a. \Cref{Eq:DynamicProgramming}, can be written as
\begin{equation}\label{Eq:DPInnerSolved}
x'Px + r  = \inf_{T\in\mathbb{N}} \left\{ x' L_T x +\sum_{t=0}^{T-1} \beta^t \Tr\left( P_{t}(\bar{\mathcal{F}}_t)  \varphi_{t} \right)+ \sum_{t=1}^{T} \beta^t \Tr\left( \Sigma_S C' L_{T-t}C \right) + \beta^T (r+O) \right\}.
\end{equation}
To fully characterize the value functions, one needs to find a real-valued matrix $P$ such that $P=L_{T^*}$, where $T^*$ is the optimal waiting time for next measurement. In the following theorem, we show that the value function $V(x)$ can be solved analytically and the optimal measurement policy is independent of $x$.

\begin{lemma}\label{Lemma:ValueFunctionCharacterization}
Write $Q=J'J$. Let $(A,B)$ be controllable and $(A,J)$ be observable.  The value function defined in \Cref{Eq:ValueFunction} is $V(x) = x'Px+ r$, where $P$ is a unique solution of the following algebraic Riccati equation
\begin{equation}\label{Eq:AlgebraicRiccatiEquation}
    P = Q + \beta A'P A - A' P B \beta (R+\beta B'P B)^{-1}\beta B'P A,
\end{equation}
and $P$ is positive definite. Here, $r$ is the unique solution of the following fixed-point equation
\begin{equation}\label{Eq:FixedPointEquationForr}
r  =  \inf_{T\in\mathbb{N}} \left\{\sum_{t=0}^{T-1} \beta^t \Tr\left( P_{t}(\bar{\mathcal{F}}_t)  \varphi \right) + \sum_{t=1}^{T} \beta^t \Tr\left( \Sigma_S C' PC \right)+ \beta^T (r+O) \right\}.
\end{equation}
\end{lemma}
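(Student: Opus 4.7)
The plan is to use the ansatz $V(x)=x'Px+r$ and verify that it satisfies the dynamic programming equation \Cref{Eq:DynamicProgramming}. By the Banach fixed-point theorem (noted right before \Cref{Lemma:InnerOptimalControl}), the DP operator is a contraction, so its fixed point is unique; exhibiting a quadratic-plus-constant solution therefore identifies $V$ completely. Applying \Cref{Lemma:InnerOptimalControl} with $V(x) = x'Px+r$ reduces the DP equation to \Cref{Eq:DPInnerSolved}. The crucial observation is that if $P$ solves the algebraic Riccati equation \Cref{Eq:AlgebraicRiccatiEquation}, then the Riccati recursion \Cref{Eq:RiccatiEquation} with $L_0=P$ is stationary: $L_T=P$ for every $T\geq 0$. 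Plugging this into \Cref{Eq:EstimationErrorCoefficient} shows that $\varphi_t$ becomes independent of $t$ and coincides with $\varphi = A' P B \beta (R+\beta B'PB)^{-1}\beta B'PA$ used in \Cref{Eq:FixedPointEquationForr}. The quadratic-in-$x$ piece $x'L_T x$ on the right-hand side then equals $x'Px$ for every $T$ simultaneously, so the $x$-dependent balance is automatic and the outer infimum over $T$ only acts on the scalar part, giving exactly \Cref{Eq:FixedPointEquationForr}.

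The next step is to establish existence, uniqueness, and positive definiteness of $P$. I would absorb the discount factor via the substitution $\tilde A=\sqrt{\beta}A$ and $\tilde B=\sqrt{\beta}B$, which rewrites \Cref{Eq:AlgebraicRiccatiEquation} as the standard (undiscounted) discrete-time algebraic Riccati equation $P = Q + \tilde A' P \tilde A - \tilde A' P \tilde B (R+\tilde B' P \tilde B)^{-1} \tilde B' P \tilde A$. Column scaling preserves the rank of the controllability and observability matrices, so $(\tilde A,\tilde B)$ is controllable and $(\tilde A, J)$ is observable whenever the hypotheses of the lemma hold. Classical results for the discrete-time LQR problem (e.g., Bertsekas, or Anderson and Moore) then supply a unique positive definite solution $P$, and the invertibility of $R+\beta B'PB$ used by \Cref{Lemma:InnerOptimalControl} follows automatically from $R>0$ and $P\geq 0$.

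For the scalar fixed-point equation, I would define the operator $\Gamma:\mathbb{R}\to\mathbb{R}$ by
\begin{equation*}
\Gamma(r) \;=\; \inf_{T\in\mathbb{N}}\Bigl\{\textstyle\sum_{t=0}^{T-1}\beta^t\Tr(P_t(\bar{\mathcal{F}}_t)\varphi) + \sum_{t=1}^{T}\beta^t\Tr(\Sigma_S C'PC) + \beta^T(r+O)\Bigr\}.
\end{equation*}
For every fixed $T\in\mathbb{N}$, the bracketed expression is affine in $r$ with slope $\beta^T\leq\beta$, so for any $r_1,r_2\in\mathbb{R}$ and every $T$ the two bracketed values differ by $\beta^T(r_1-r_2)$, whose absolute value is bounded by $\beta|r_1-r_2|$. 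Since taking an infimum of a family of $\beta$-contractions is itself a $\beta$-contraction, $|\Gamma(r_1)-\Gamma(r_2)|\leq\beta|r_1-r_2|$ with $\beta<1$, and another application of Banach yields a unique $r\in\mathbb{R}$ satisfying \Cref{Eq:FixedPointEquationForr}. Combined with the previous paragraph, this $P$ and $r$ make $V(x)=x'Px+r$ a solution of \Cref{Eq:DynamicProgramming}, and uniqueness of the DP solution completes the identification.

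The main technical obstacle, I expect, will be the step asserting $L_T\equiv P$ cleanly separates the quadratic and scalar balances in \Cref{Eq:DPInnerSolved}: one must check that a single $P$ (not a $T$-dependent matrix) really does make the $x$-dependent balance hold for \emph{all} $T$ simultaneously, so that the outer $\inf_T$ acts only on scalars. This is exactly where the stationarity of the Riccati recursion at an ARE solution plays a decisive role; without it, one would be forced to solve a coupled equation in $P$ and $T^*$ rather than decoupled equations for $P$ and $r$.
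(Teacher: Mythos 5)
Your proposal is correct and follows the same overall decomposition as the paper: reduce the DP equation via \Cref{Lemma:InnerOptimalControl} to \Cref{Eq:DPInnerSolved}, pin the quadratic coefficient to the algebraic Riccati equation so that $L_T\equiv P$ and $\varphi_t\equiv\varphi$, and then treat \Cref{Eq:FixedPointEquationForr} as a scalar $\beta$-contraction. The one place you genuinely diverge is in how $P$ is pinned down. The paper argues \emph{necessity} directly: under controllability the Riccati iterates satisfy $L_0\leq L_1\leq\cdots\leq L_T$ (citing Kushner), so demanding $L_T=P=L_0$ for the quadratic balance forces $L_0=L_1=\cdots=L_T=P$, hence $P$ must solve \Cref{Eq:AlgebraicRiccatiEquation}; existence, uniqueness and positive definiteness are then quoted from the same source under observability. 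You instead prove \emph{sufficiency} (an ARE solution makes the recursion stationary, so the quadratic balance holds for every $T$ at once) and recover necessity from the uniqueness of the DP fixed point, obtaining the ARE solution itself via the $\tilde A=\sqrt{\beta}A$, $\tilde B=\sqrt{\beta}B$ rescaling to the standard undiscounted DARE. Your route is self-contained modulo the contraction property of the DP operator (which the paper also asserts without proof, and which strictly speaking needs a weighted-norm argument since the quadratic cost is unbounded), and it avoids the monotonicity lemma entirely; the paper's route makes the "DP holds if and only if $P$ solves the ARE" equivalence explicit without leaning on uniqueness of $V$. Your explicit verification that $\Gamma$ is a $\beta$-contraction (slope $\beta^T\leq\beta$ for each $T\geq 1$, and an infimum of $\beta$-Lipschitz affine maps is $\beta$-Lipschitz) is actually more detailed than what the paper writes for that step.
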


\begin{proof}
See \Cref{Proof:ValueFunctionCharacterization}.
\end{proof}

\Cref{Lemma:ValueFunctionCharacterization} shows that the value function is indeed quadratic in $x$ and $P$ is a positive definite matrix that satisfies the algebraic Riccati equation \Cref{Eq:AlgebraicRiccatiEquation}. The quadratic term of $x'Px$ in the value function $V(x)$ is the same as regular (no measurement cost) discounted infinite-horizon linear quadratic optimal control problem. And the optimal waiting time for next observation $T^*$, which is the minimizer of \Cref{Eq:FixedPointEquationForr}, is independent of $x$. To obtain the optimal policy, it remains to characterize $r$.

\begin{theorem}\label{Theorem:CharacterizationRandOptimalObservation}
Suppose that conditions in \Cref{Lemma:ValueFunctionCharacterization} hold, i.e., $(A,B)$ be controllable and $(A,J)$ be observable. Let $\varphi = A' PB \beta (R+\beta B'P B)^{-1}\beta B' PA$. The optimal measurement policy and the value of $r$ can be characterized as
\begin{enumerate}
    \item If the cost of measurement $O < \Tr\left( C\Sigma_S C \varphi\right)$, the optimal measurement policy is to observe every time, i.e., $T^* =1$. The solution of \Cref{Eq:FixedPointEquationForr} is 
    $$
    r=\frac{\beta}{1-\beta} \Tr\left( \Sigma_S C'PC \right)+ \frac{\beta}{1-\beta} O.
    $$ 
    The value function is 
    $$V(x) = x'Px + \frac{\beta}{1-\beta} \Tr\left( \Sigma_S C'PC\right) + \frac{\beta}{1-\beta} O. $$
    
    \item Given the cost of measurement $O$, the optimal policy is to wait $T^*$ steps for next measurement and $T^*$ can be determined by
    \begin{equation}\label{Eq:DetermineOptimalT}
    \sum_{t=0}^{T^*-2} \frac{1-\beta^{t+1}}{1-\beta} \Tr\left( (A')^tC' \Sigma_S CA^t \varphi\right) \leq O < \sum_{t=0}^{T^*-1} \frac{1-\beta^{t+1}}{1-\beta} \Tr\left( (A')^tC' \Sigma_S CA^t \varphi\right).
    \end{equation}
    The solution of \Cref{Eq:FixedPointEquationForr} is 
    $$
    r = \frac{\sum_{t=0}^{T^*-1} \beta^t \Tr\left( P_{t}(\bar{\mathcal{F}}_t)\varphi\right)}{1-\beta^{T^*}} + \frac{\beta}{1-\beta} \Tr\left( \Sigma_S C' PC\right) + \frac{\beta^{T^*}}{1-\beta^{T^*}} O,
    $$
    where the $P_t(\bar{\mathcal{F}}_t)$ is propagated according to \Cref{Eq:EstimateCovariancePropa}. The value function is 
    \begin{equation}\label{Eq:CharacterizedValueFunction}
    V(x) = x'P x + \frac{\sum_{t=0}^{T^*-1} \beta^t \Tr\left( P_{t}(\bar{\mathcal{F}}_t)\varphi\right)}{1-\beta^{T^*}} + \frac{\beta}{1-\beta} \Tr\left( \Sigma_S C' PC\right) + \frac{\beta^{T^*}}{1-\beta^{T^*}} O.
    \end{equation}
    \item If $A$ is table, there exists a unique solution $W_\infty$ of the Lyapunov function
    \begin{equation}\label{Eq:LypFun}
    W_\infty - A' W_\infty A = C'\Sigma_S C.
    \end{equation}
    If, in addition, $ O\geq \frac{\Tr\left( W_\infty \varphi\right)}{1-\beta} -  \sum_{t=0}^\infty \beta^t \Tr\left( P_t(\bar{\mathcal{F}}_t)\varphi\right)$, the optimal measurement policy is not to measure at all, i.e., $T^* = \infty$. The value function then will be
    $$
    V(x) = x'Px + \sum_{t=0}^\infty \beta^t \Tr\left( P_t(\bar{\mathcal{F}}_t) \varphi\right) + \frac{\beta}{1-\beta} \Tr\left( \Sigma_S C'PC\right).
    $$
    Otherwise, $T^*$ is finite and can be determined by 2).
\end{enumerate}
\end{theorem}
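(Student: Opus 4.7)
The plan is to reduce the fixed-point equation for $r$ in \Cref{Eq:FixedPointEquationForr} to a one-dimensional optimisation over $T \in \mathbb{N}$ and then identify the minimiser explicitly. Abbreviating $\alpha_t := \Tr(P_t(\bar{\mathcal{F}}_t)\varphi)$ and denoting the bracketed expression inside the infimum in \Cref{Eq:FixedPointEquationForr} by $\phi(T,r)$, fixing $T$ and solving the equation $r = \phi(T,r)$ yields the closed form
$$r(T) = \frac{\sum_{t=0}^{T-1}\beta^t \alpha_t}{1-\beta^T} + \frac{\beta}{1-\beta}\Tr(\Sigma_S C' PC) + \frac{\beta^T}{1-\beta^T}O,$$
so that $r = \inf_T r(T)$ and the optimal waiting time $T^*$ is the corresponding argmin. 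Equivalently, by \Cref{Lemma:ValueFunctionCharacterization} we have $r^* = \phi(T^*, r^*) \leq \phi(T, r^*)$ for every $T$.

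A direct computation gives the one-step difference
$$\phi(T+1,r) - \phi(T,r) = \beta^T\bigl[\alpha_T + \beta\Tr(\Sigma_S C' P C) - (1-\beta)(r+O)\bigr].$$
Substituting $r = r(T^*)$ and invoking the recursion $P_{t+1} = A' P_t A + C'\Sigma_S C$ from \Cref{Eq:EstimateCovariancePropa} (so that $P_t = \sum_{s=0}^{t-1}(A')^s C'\Sigma_S C A^s$ and, writing $\rho_s := \Tr((A')^s C'\Sigma_S C A^s\varphi)$, $\alpha_t = \sum_{s=0}^{t-1}\rho_s$), a Fubini-type sum exchange combined with the telescoping identity
$$(1-\beta^{T})\alpha_{T} - (1-\beta)\sum_{t=0}^{T-1}\beta^t \alpha_t \;=\; \sum_{s=0}^{T-1}(1-\beta^{s+1})\rho_s$$
transforms the two optimality inequalities $\phi(T^*+1,r^*) \geq r^*$ and $\phi(T^*-1,r^*)\geq r^*$ into exactly the two-sided bound \Cref{Eq:DetermineOptimalT}. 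Specialising to $T^* = 1$ makes the lower bound vacuous and reduces the upper bound to $\Tr(C'\Sigma_S C\, \varphi)$, which recovers part~(1); reading off $r(T^*)$ and adding it to $x'Px$ gives \Cref{Eq:CharacterizedValueFunction}, completing part~(2).

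For part~(3), I would analyse the limit $T\to\infty$ separately. When $A$ is stable, $P_t$ converges monotonically to the unique solution $W_\infty$ of \Cref{Eq:LypFun}, and all the relevant series converge absolutely; passing to the limit in $r(T)$ using $\beta^T\to 0$ yields the stated value function. The threshold is obtained by sending $T^*\to\infty$ in the lower bound of \Cref{Eq:DetermineOptimalT} and applying the Fubini identity $\sum_{t=0}^\infty \beta^t\alpha_t = \tfrac{\beta}{1-\beta}\sum_{s=0}^\infty \beta^s \rho_s$ together with $\sum_{s=0}^\infty \rho_s = \Tr(W_\infty\varphi)$: above this threshold $r(T)$ is monotone decreasing in $T$ and $T^*=\infty$, otherwise the minimiser is finite and case~(2) applies.

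The principal obstacle is the bookkeeping in the second paragraph: after substituting $r(T^*)$ into the difference $\phi(T+1,r)-\phi(T,r)$, the quantity $r^*$ appears on both sides of the resulting inequality and must be eliminated. The telescoping identity displayed above is what makes the final condition in \Cref{Eq:DetermineOptimalT} free of $r^*$ and of the $\Tr(\Sigma_S C' PC)$ term, so verifying that identity carefully, with the correct normalisation of the sums of $\rho_s$, is the technical heart of the argument; everything else reduces to substitution and the contraction-based uniqueness provided by \Cref{Lemma:ValueFunctionCharacterization}.
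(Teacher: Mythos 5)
Your proposal follows essentially the same route as the paper's proof: both work with the one-step difference $f(T+1)-f(T)=\beta^{T}h(T)$ of the same objective, exploit the monotonicity of the bracketed term in $T$, substitute the fixed-point value $r(T^*)$ and telescope over $\rho_s=\Tr\left((A')^{s}C'\Sigma_S C A^{s}\varphi\right)$ to obtain \Cref{Eq:DetermineOptimalT}, and treat $T^*=1$ and $T^*=\infty$ (via the Lyapunov solution $W_\infty$) as boundary cases. The only ingredient you omit is the paper's auxiliary claim, proved with Gelfand's formula, that $\sum_{t}\Tr\left((A')^{t}C'\Sigma_S C A^{t}\varphi\right)$ diverges whenever $A$ is not Schur stable; this is what guarantees that the upper bound in \Cref{Eq:DetermineOptimalT} is eventually violated, so that a finite $T^*$ exists for every $O$ in the unstable case.
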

\begin{proof}
See \Cref{Proof:CharacterizationRandOptimalObservation}
\end{proof}

\begin{remark}
From \Cref{Lemma:ValueFunctionCharacterization}, we know that the optimal policy is independent of the current observed state. Hence, the optimal measurement policy is to measure periodically. The optimal measurement policy is then determined by the optimal inter-measurement time $T^*$, which can be computed according to \Cref{Theorem:CharacterizationRandOptimalObservation}. Thus, the optimal policy can be written as 
\begin{equation}\label{Eq:CharacterizationofOptimalPolicy}
\tilde{\pi}(x) = (T^*, -K x, -K(A-BK)x, \cdots, -K(A-BK)^{T^*-1}x ),
\end{equation}
where $K = (R + \beta B' P B)^{-1} \beta B' P A$. Different from \cite{huang2019continuous} in which continuous-time Markov decision process with costly measurement is studied and the optimal measurement policy depends on the current observed state, the optimal policy is independent of the current observed state in the infinite-horizon LQG setting. This is due to the linearity of the system and the Gaussian noise that can be fully characterized by its mean and covariance.
\end{remark}

\begin{remark}
From \Cref{Eq:DetermineOptimalT} and \Cref{Eq:CharacterizationofOptimalPolicy}, we can characterize the optimal strategy $\pi^* = (\mu^*,\nu^*)\in\Pi$ for the original problem defined by \Cref{Eq:CostFunctional}. Given the measurement history $I_{t-1}$, let $m_\tau$ be the number steps since the last measurement times instance and $\bar{P}_\tau$ be the surrogate covariance that are updated according to
\begin{equation}\label{Eq:SurrogateVariables}
\begin{aligned}
m_{\tau} &= \begin{cases}
0,\ \ \ &\textrm{if } i_\tau =1,\\
m_{\tau-1} + 1,\ \ \ &\textrm{if } i_\tau =0,
\end{cases}\\
\bar{P}_{\tau} &= \begin{cases}
0,\ \ \ &\textrm{if }i_\tau =1\\
\bar{P}_{\tau-1} + \frac{1-\beta^{m_{\tau-1}+1}}{1-\beta} (A')^{m_{\tau-1}} C'\Sigma_S CA^{m_{\tau-1}},\ \ \ &\textrm{if }i_\tau =0,
\end{cases}
\end{aligned}
\end{equation}
for $\tau = 1,2,\cdots,t-1$ with $m_0 = 0$  and $\bar{P}_0 =0$. Note that $I_{t-1} \subset \mathcal{F}_t$. The optimal measurement can then be written as
$$
i^*_t =\mu^*(\mathcal{F}_t) = \begin{cases}
1,\ \ \ &\textrm{if } \Tr\left( \left[\bar{P}_{t-1} +\frac{1-\beta^{m_{t-1}+1}}{1-\beta} (A')^{m_{t-1}} C'\Sigma_S CA^{m_{t-1}} \right] \varphi\right) > O,\\
0,\ \ \ &\textrm{Otherwise}.
\end{cases}
$$
Given the measurement history $I_{t}$ and the control history $U_{t-1}$, define the estimate $\bar{x}_t$ as
$$
\bar{x}_\tau = \begin{cases}
x_\tau,\ \ \ &\textrm{if }i_\tau =1,\\
A \bar{x}_{\tau-1} + B u_{\tau-1},\ \ \ &\textrm{if }i_\tau =0,
\end{cases}
$$
for $\tau =1,2,\cdots,t$ with $\bar{x}_0 =x_0$.
Note that $I_t\cup U_{t-1} \subset \bar{\mathcal{F}}_t$. The optimal control strategy can then be written as 
$$
u^*_t = \nu^*(\bar{\mathcal{F}}_t) = -K\bar{x}_\tau.
$$
Note that in \Cref{Eq:SurrogateVariables},  the term $(A')^{m_{\tau-1}} C'\Sigma_S CA^{m_{\tau-1}}(1-\beta^{m_{\tau-1}+1})/(1-\beta) $ can be updated recursively. Hence, $m_t$, $\bar{P}_t$ and $\hat{x}$ can be updated recursively, so there is no need to keep the history of them. This provides an online implementation of the results in \Cref{Lemma:InnerOptimalControl} and \Cref{Theorem:CharacterizationRandOptimalObservation}.
\end{remark}

\begin{remark}
When there is not cost of measurement, i.e., $O=0$, the problem reduces to the classic discounted infinite-horizon LQG problem \cite{bertsekas1995dynamic}. \Cref{Theorem:CharacterizationRandOptimalObservation} tells that it is optimal to measure every time, i.e., $T^* =1$. The value function is $V(x) = x'Px + \frac{\beta}{1-\beta} \Tr\left( \Sigma_SC'PC\right)$, which is the same as the value function of the classic discounted infinite-horizon LQG problem \cite{bertsekas1995dynamic,gommans2014self}. The optimal measurement policy is to not measure at all only when $A$ is stable and $ O\geq \frac{\Tr\left( W_\infty \varphi\right)}{1-\beta} -  \sum_{t=0}^\infty \beta^t \Tr\left( P_t(\bar{\mathcal{F}}_t)\varphi\right)$. Here, $P_{t}(\bar{\mathcal{F}}_t)$ is propagated according to \Cref{Eq:EstimateCovariancePropa}, who can also be expressed by the closed-form expression
$$
P_t(\bar{\mathcal{F}}_t) = \sum_{\tau=0}^{t-1} (A')^\tau C' \Sigma_S C A^\tau.
$$
\end{remark}

\begin{remark}
The framework of LQG control with costly measurements can naturally be applied to optimal self-triggered control paradigm \cite{gommans2014self,akashi2018self} considering their similar purposes of reducing the cost of sensing and the cost of communication. In an optimal self-triggered control paradigm, a fixed control between two measurements is considered in most cases. In \cite{gommans2014self}, the authors also discuss the case when multiple control commands are allowed in one packet, i.e., instead of applying a fixed control command, a sequence of time-varying control commands between two measurement instances. If multiple control commands are allowed in one packet, the optimal strategy in \Cref{Eq:CharacterizationofOptimalPolicy} can be used to implement an optimal self-triggered control paradigm. If only a single control command is allowed in one packet, we need to look into the policies
$\tilde{\pi}_f\in \tilde{\Pi}_f \subset \tilde{\Pi}$, where
$$
\tilde{\Pi}_f \coloneqq \left\{ \tilde{\pi} \in \tilde{\Pi}\  \middle\vert\ (T,u_{0:T-1}) = \tilde{\pi}(x)\ \textrm{satisfying }  u_0=u_1=\cdots=u_{T-1}\ \textrm{for all }x\in\mathcal{X} \right\}.
$$
Define the value function of the fixed control problem as $V_f(x)\coloneqq \inf_{\tilde{\pi}_f} \tilde{F}(\tilde{\pi}_f;x)$. Following the proof of \Cref{Theorem:DynamicProgramming}, we have
$$
V_f(x) = \inf_{T\in\mathbb{N}} \inf_{u\in \mathcal{U}} \mathbb{E}\left[ \sum_{t=0}^{T-1} \beta^t (x_t'Q x_t + u'Ru) + \beta^T V_f(x_T) + \beta^T O\middle \vert x_0 =x \right].
$$
Then, to find the optimal strategy, we need to find a strategy $\tilde{\pi}_f^*(x) = (T^*,u^*,\cdots,u^*)$ such that 
$$
V_f(x) = \mathbb{E} \left[ \sum_{t=0}^{T^*-1} \beta^t (x_t' Q x_t + {u^*}' R u^*) + \beta^{T^*} V_f(x_{T^*}) + \beta^{T^*} O \middle \vert x_0 =x \right].
$$
Here, we leave the characterization of the value function $V_f$ and the optimal strategy $\tilde{\pi}^*_f$ for future works. We can see that once $\tilde{\pi}_f^*$ is characterized, it can be implemented in the self-triggered control paradigm that only allows one control command in one control packet. And $\tilde{\pi}_{f}^*$ will optimize the trade-off between the control performance and the communication/sensing overhead. 
\end{remark}

In this section, we fully characterize the optimal measurement strategies and the optimal control strategies for both the original problem and its representation. Different implementation schemes are discussed. We also shed some light on the potential application of the LQG control with costly measurements framework in optimal self-triggered control. In the next section, we show how the optimal measurement strategy is determined by the cost of measurements and the dynamic behavior of certain systems under the optimal control and measurement strategies.

\section{Experiments} \label{Sec: Experiements}
In this section, we demonstrate the effectiveness of the optimal measurement strategy in reducing the overhead of measurements while keeping the system performance.  We explore two examples: one is with a Schur usntable system matrix $A_1$ and one is with a Schur stable matrix $A_2$.

The two systems, called \textbf{sys1} and \textbf{sys2}, are with system matrices 
$$
\begin{aligned}
A_1 = \begin{bmatrix}
-0.61 & 0.53 & 1.3\\
-1.15 & -0.03 & -0.96\\
-0.78 & 0.24 & -0.02
\end{bmatrix},\ \ \ 
\end{aligned}
A_2 = \begin{bmatrix}
-0.61 & 0.53 & 0.3\\
-0.95 & -0.03 & -0.56\\
-0.78 & 0.24 & -0.02
\end{bmatrix}.
$$
Other system parameters of the two systems are set to be the same. Namely,
$$
B = \begin{bmatrix}
0.12 & -0.55\\
0.86 & 0.08\\
1.16 & -0.60\\
\end{bmatrix},\ \ \ C = \begin{bmatrix}
1 & 0 & 0\\
0 & 1 & 0\\
0 & 0 & 1
\end{bmatrix},\ \ \ \sigma = 0.08\cdot\begin{bmatrix}
1 & 0 & 0\\
0 & 1 & 0\\
0 & 0 & 1
\end{bmatrix}.
$$
Suppose the initial condition is given as $x_0 = [20\ -15\ 10]'$.
The magnitudes of the three eigenvalues of $A_1$ are $(1.3561,1.3561,0.0791)$. Hence, $A_1$ is Schur unstable.  The magnitudes of the three eigenvalues of $A_2$ are $(0.9755,0.9755,0.0669)$.  Hence $A_2$ is Schur stable. It is easy to see that both \textbf{sys1} and \textbf{sys2} are controllable.

The cost parameters are given as $Q=0.1\cdot \Id$, $R=0.2\cdot \Id$ and $\beta =0.95$. Here, $\Id$ represents the identity matrix with a proper dimension.  The cost of measurement $O$ is subject to change.

\begin{figure}
     \centering
     \begin{subfigure}[b]{0.32\textwidth}
         \centering
         \includegraphics[width=\textwidth]{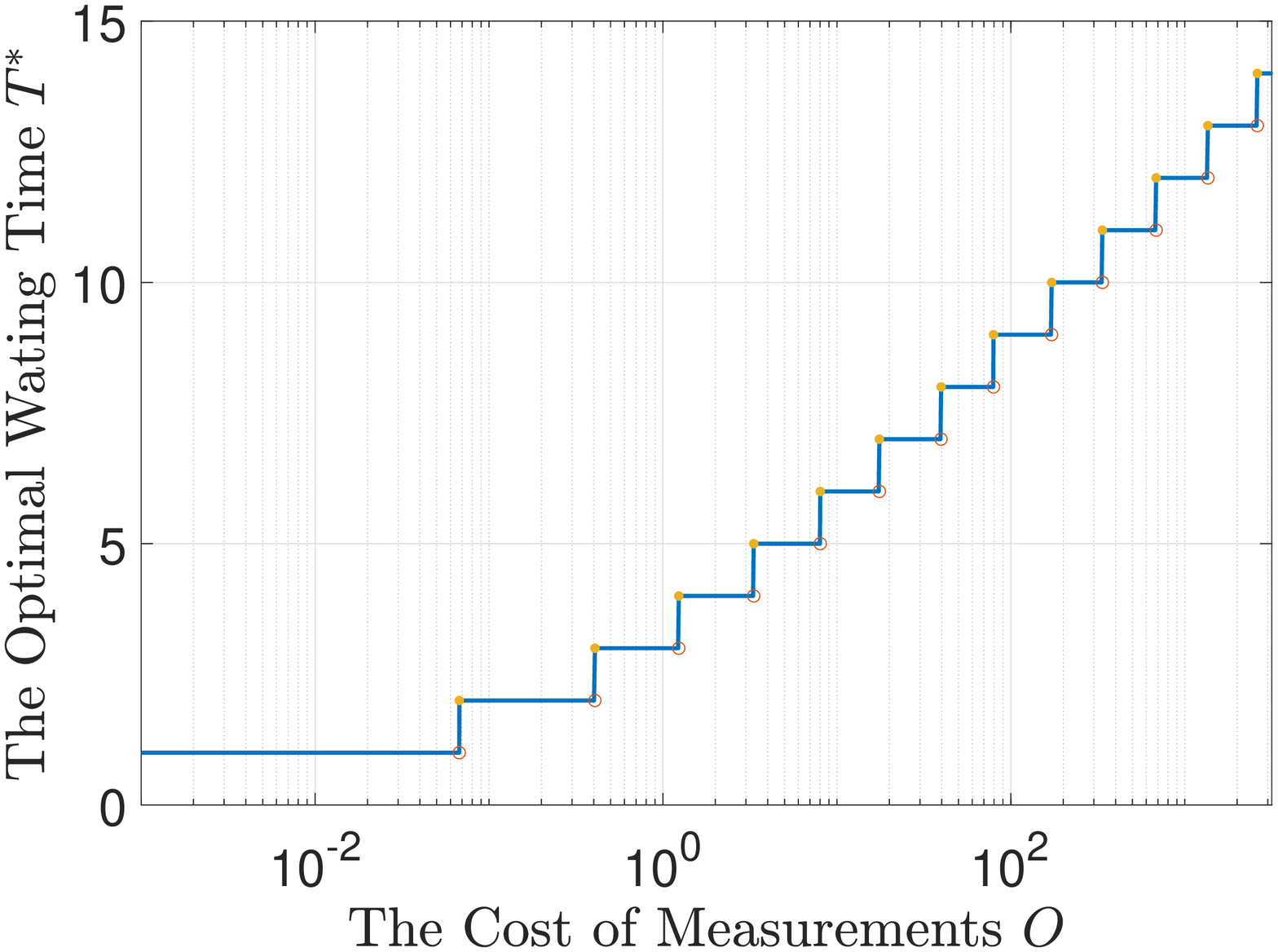}
         \caption{{\footnotesize An illustration of how the optimal waiting time $T^*$ is affected by the cost of measurements $O$.}}
         \label{fig:OVersusTstar}
     \end{subfigure}
     \hfill
     \begin{subfigure}[b]{0.32\textwidth}
         \centering
         \includegraphics[width=\textwidth]{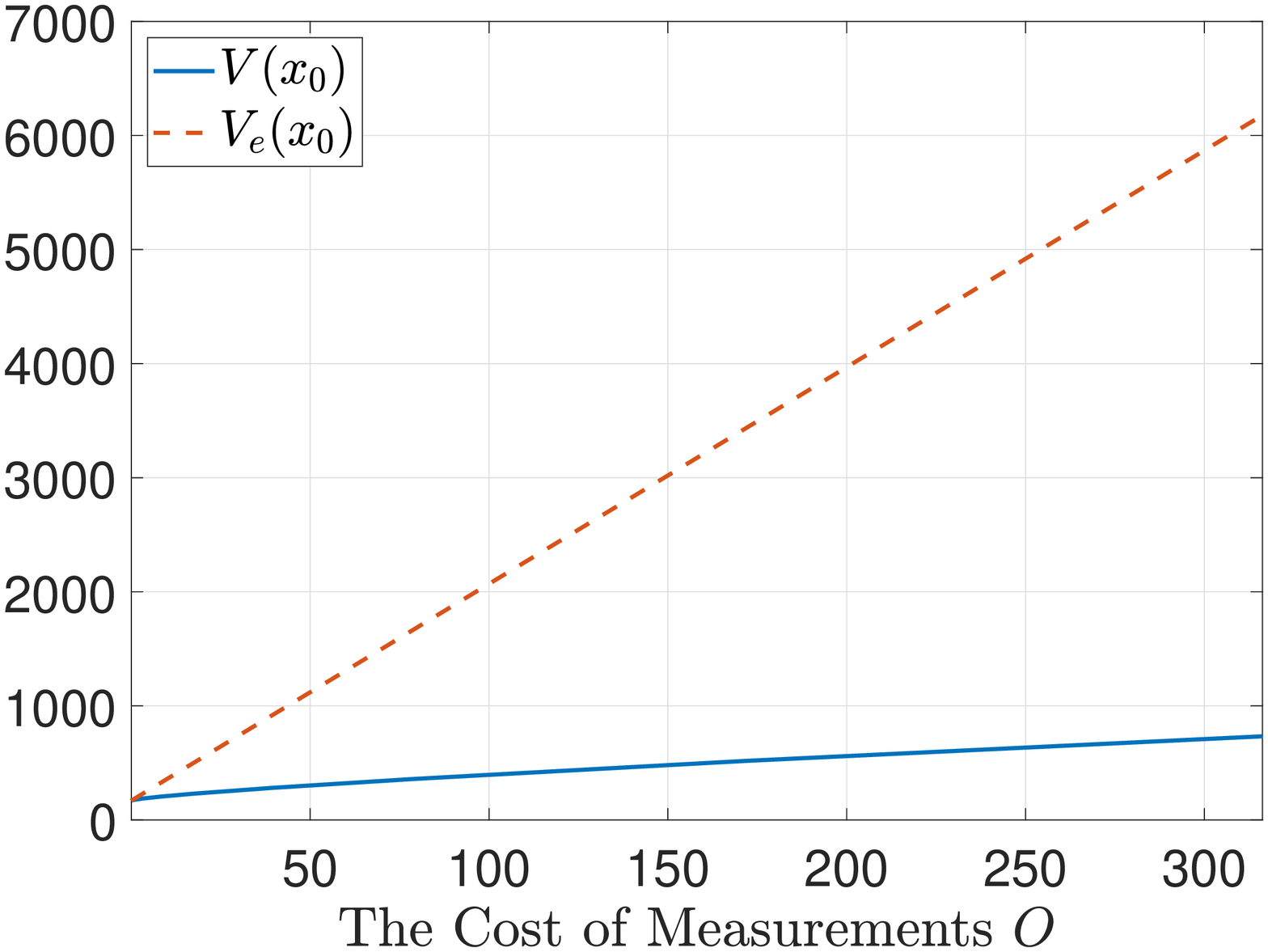}
         \caption{{\footnotesize An illustration of how the value $V(x_0)$ of the problem increases as the cost of measurements $O$ increases. }}
         \label{fig:ComparedTwoStrategies}
     \end{subfigure}
     \hfill
     \begin{subfigure}[b]{0.32\textwidth}
         \centering
         \includegraphics[width=\textwidth]{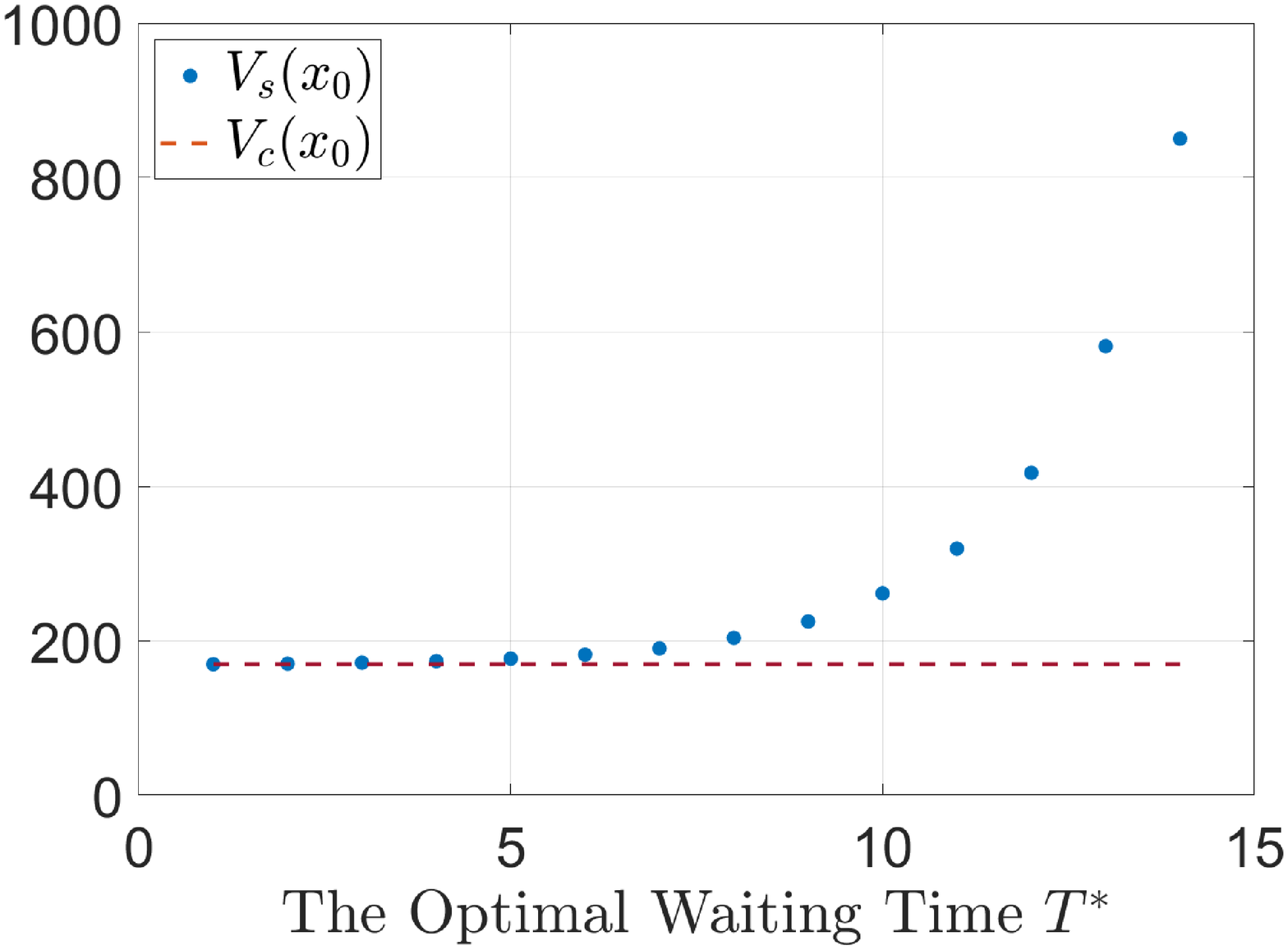}
         \caption{{\footnotesize An illustration of how controlled measurements affect the system performance (the cost excluding measurement costs).}}
         \label{fig:SysPerformVersusT}
     \end{subfigure}
        \caption{{\footnotesize Three Illustrations of the Overall Performance of the Optimal Measurement Strategy for \textbf{sys1}.}}
        \label{fig:three graphs}
\end{figure}

To compare different scenarios, we define the following quantities. Let $V_s(x_0)$ be the optimal system cost (cost excluding the cost of measurements) of the system starting at $x_0$. By definition and the results in\Cref{Eq:CharacterizedValueFunction}, 
$$
V_s(x_0)\coloneqq V(x_0) - \frac{\beta^{T^*}}{1-\beta^{T^*}} O = x'P x + \frac{\sum_{t=0}^{T^*-1} \beta^t \Tr\left( P_{t}(\bar{\mathcal{F}}_t)\varphi\right)}{1-\beta^{T^*}} + \frac{\beta}{1-\beta} \Tr\left( \Sigma_S C' PC\right),
$$
where $T^*$ is determined by $O$ according to \Cref{Eq:DetermineOptimalT}. Let $V_c(x_0)$ be the  optimal cost (value) of the classic LQG control problem, i.e., $V_c(x_0) \coloneqq x_0'Px_0 + {\beta}/{(1-\beta)} \Tr\left( \Sigma_S C' PC\right)$. Let $V_e(x_0)$ be the total cost when the measurement strategy is to measure every time. That is $V_e(x_0) \coloneqq x_0'Px_0 + O*\beta /(1-\beta)  $.

\begin{figure}[ht]
    \centering
    \includegraphics[width=1\columnwidth]{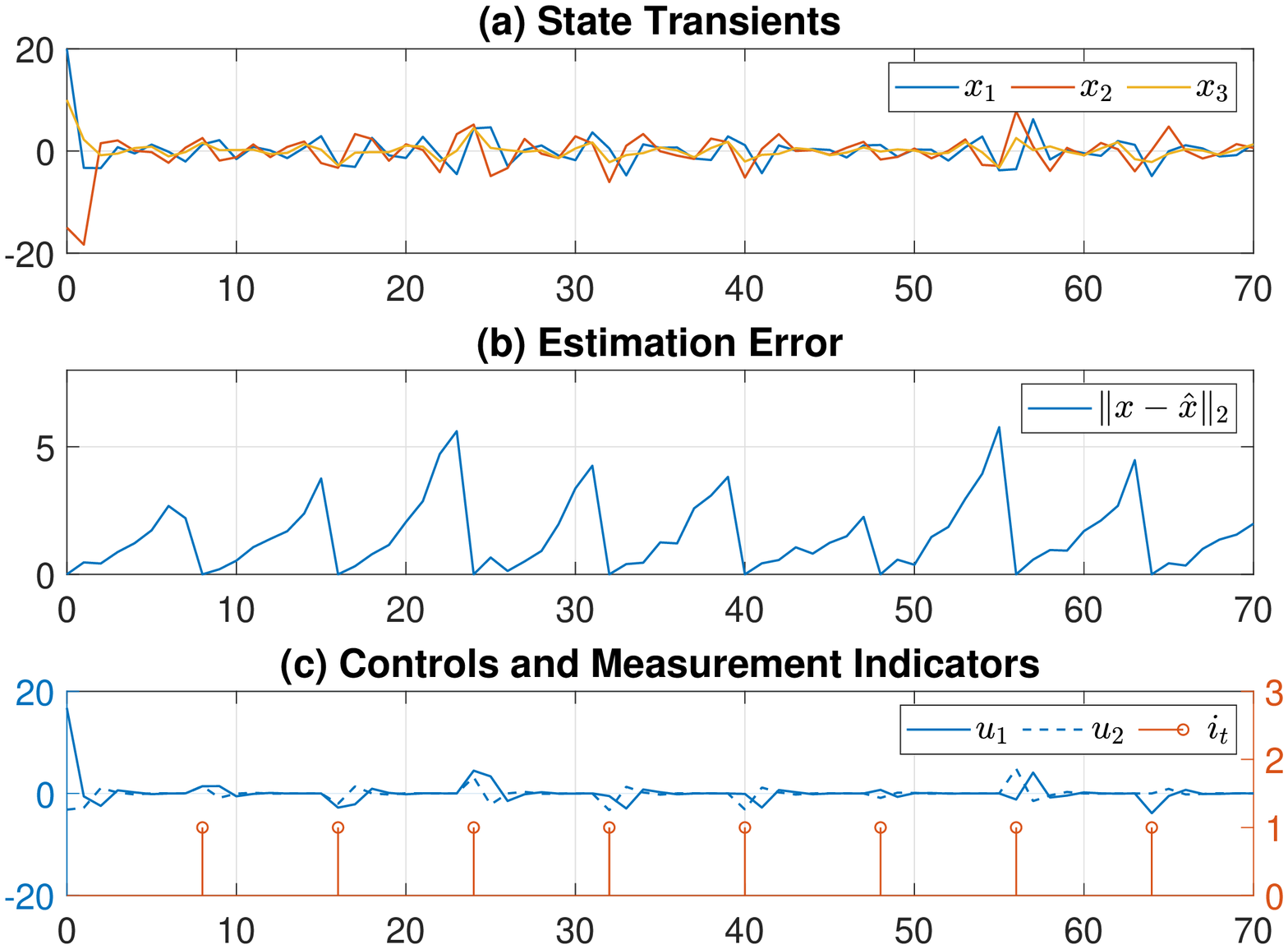}
    \caption{The dynamic behavior of \textbf{sys1} under the optimal measurement strategy when the cost of measurements is $50$.}
    \label{fig:DynamicBehaviorSys1O50}
\end{figure}

We have shown in \Cref{Theorem:CharacterizationRandOptimalObservation} that the optimal measurement strategy is to measure periodically and the optimal period length is determined by $O$. \Cref{fig:OVersusTstar} gives the relations between the cost of measurements $O$ and the optimal period length $T^*$ ($T^*$ is also called the optimal waiting time). It shows that even when the cost of measurement is relatively low (it is relatively low compared with the optimal cost of the classic problem $V_c(x_0) \coloneqq x_0'Px_0 + \beta\Tr\left(\Sigma_s C' P C\right)/(1-\beta) = 169.45$), the optimal measurement strategy suggests not measure every time. For example, when the cost of measurements is $10$, i.e., $O = 10$, the optimal measurement strategy is to measure every $6$ steps, $T^*=6$. That means the system performance is not degraded much even when the controller only chooses to measure once in $6$ steps. We can also see this point from \Cref{fig:SysPerformVersusT}, where the relations between the optimal cost excluding measurement costs $V_s(x_0)$ and the optimal waiting time $T^*$. We can see that when $T^*=6$ (corresponding to $O=10$), $V_s(x_0) = 176.65$. Compared with the strategy of measuring every time, the optimal measurement strategy only induces $(V_s(x_0)-V_c(x_0))/V_c(x_0) = 4.25\%$ degradation of the system performance. And more importantly, by following the optimal measurement strategy, i.e., measuring only once in $6$ steps, the controller can cut down $\beta O /(1-\beta) - \beta^{T^*} O /(1-\beta^{T^*}) = 0.95*10/0.05 - 0.95^6*10/(1-0.95^6)= 162.25$ cost of measurements. The cost of measurements saved constitutes $162.25/V(x_0)=79.38\%$ of the whole optimal cost $V(x_0)$. This shows the effectiveness of the optimal measurement strategy in reducing the overhead of measurements while keeping the system performance. To further compared the optimal measurement strategy with the strategy of measuring every time, we presents \Cref{fig:ComparedTwoStrategies}. The red dash line shows the total cost $V_e(x_0)$ of the problem when the controller chooses to measure every time. The blue line shows the optimal cost of the problem when the controller adopts the optimal measurement strategy. \Cref{fig:ComparedTwoStrategies} demonstrates that by adopting the optimal measurement strategy, the total cost will be reduced by a large quantity. And the larger the cost of measurements $O$, the more cost that the optimal measurement strategy can save.

\begin{figure}[ht]
    \centering
    \includegraphics[width=1\columnwidth]{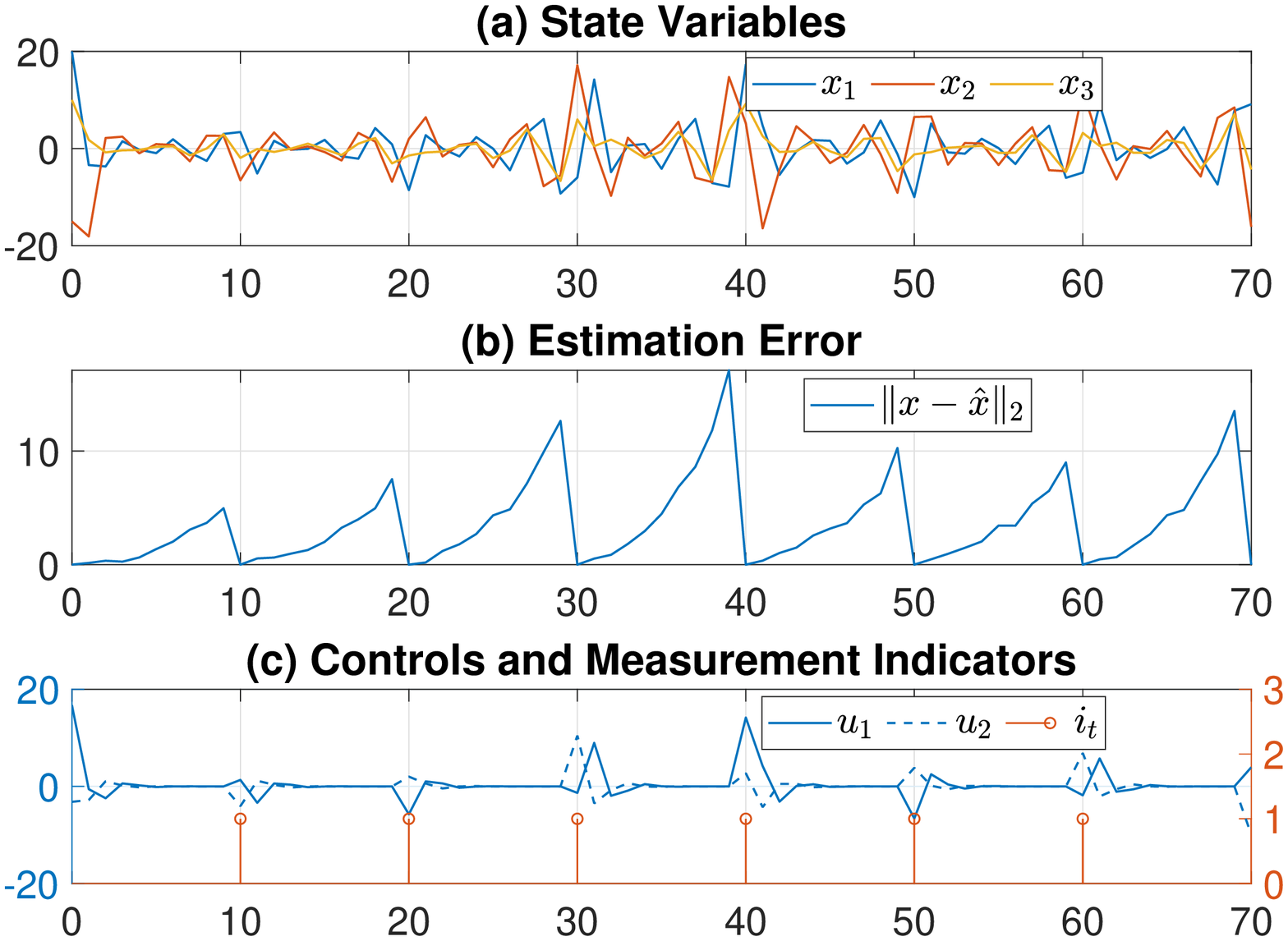}
    \caption{The dynamic behavior of \textbf{sys1} under the optimal measurement strategy when the cost of measurements is $300$.}
    \label{fig:DynamicBehaviorSys1O300}
\end{figure}

Note that the eigenvalues of $A_1$ have maximal magnitude $1.3561 >1$. Because the estimate error will be accumulated and amplified by $A_1$ if no measurement is made, the estimation quality deteriorate exponentially within a non-measurement interval, which will increases the system cost through the optimal control $u^*_t = -K\hat{x}_t$. Thus, from \Cref{fig:OVersusTstar}, we can see that the optimal waiting time grows linearly as the cost of measurements $O$ increases exponentially. Also, we can see, from \Cref{fig:SysPerformVersusT}, that the optimal system cost $V_s(x_0)$ increases exponentially as the optimal waiting time increases.

Next, we show the dynamic behavior of \textbf{sys1} under the optimal measurement strategy when the cost of measurements $O$ is $50$. When $O=50$, $T^*=8$. \Cref{fig:DynamicBehaviorSys1O50} presents the transitions of the state, the evolution of estimation error, and the selections of controls and measurements over $70$ steps. From \Cref{fig:DynamicBehaviorSys1O50}, we can see that the state is stabilized to the origin and evolves around the origin. The estimation error accumulates when there is no measurement and is cleared once a measurement is made. Between two measurements, the controls are open-loop controls with an initial condition equal to the last measured state. The open-loops controls are generated based on the estimate $\hat{x}$ which propagates like a noiseless system, i.e., $\hat{x}_{t+1}= A\hat{x}_t +Bu_t $ when there is no measurement. Then $\hat{x}_t$ tends to be zero if no measurement is made. Thus, as we can see from \Cref{fig:DynamicBehaviorSys1O50}, the controls tends to be zero until a new measurement is made. When the cost of measurements $O$ increases to $300$, $T^*=10$ and the dynamic behavior of \textbf{sys1} is shown in \Cref{fig:DynamicBehaviorSys1O300}. We can see that the state can still be stabilized to the origin but evolves around the origin with a larger margin. The estimation error accumulates to a higher magnitude before it is cleared by a measurement. The control still exhibits open-loop behavior (approaches zero when no measurement is made) between two measurements.

\begin{figure}[ht]
    \centering
    \includegraphics[width=1\columnwidth]{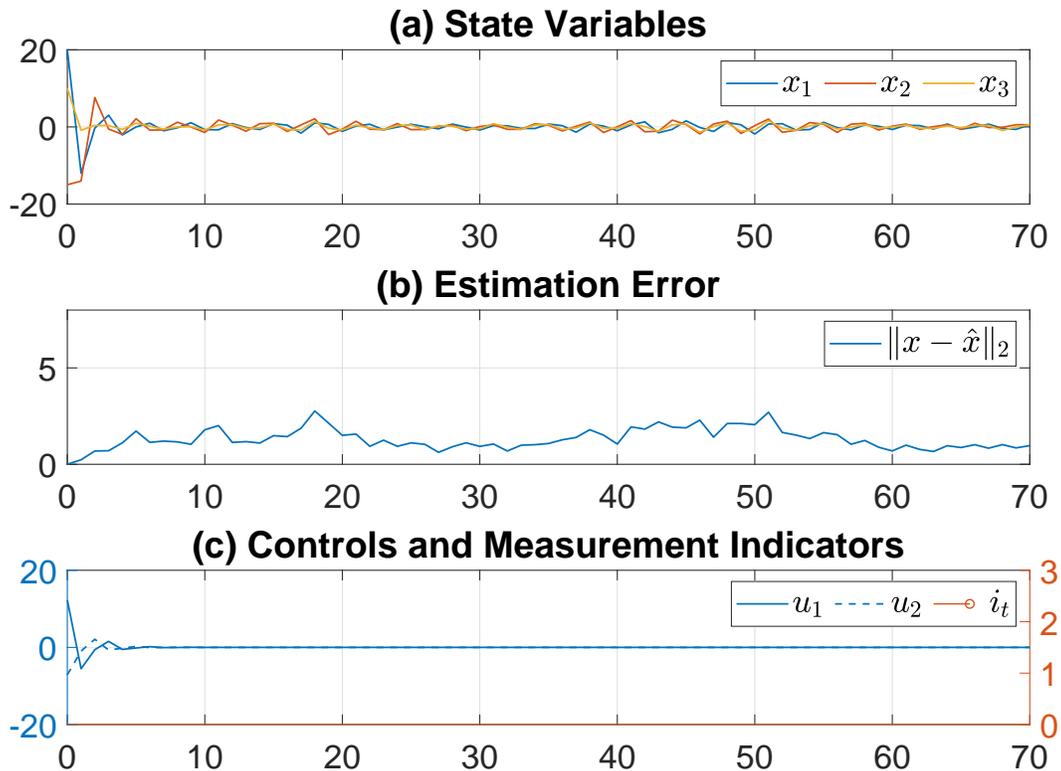}
    \caption{The dynamic behavior of \textbf{sys2} under the optimal measurement strategy when the cost of measurements is $7$.}
    \label{fig:DynamicBehaviorSys2O7}
\end{figure}

Lastly, we considers \textbf{sys2} where we have a Schur stable system matrix $A_2$. In this case, solving the Lyapunov function in \Cref{Eq:LypFun} for $W_\infty$ gives
$$
W_\infty= \begin{bmatrix}
    2.5129  & -0.8009  &  0.2130\\
   -0.8009  &  0.9080  &  0.5897\\
    0.2130  &  0.5897 &   0.8710\\
\end{bmatrix}.
$$

From 3) of \Cref{Theorem:CharacterizationRandOptimalObservation}, we know that if $ O\geq {\Tr\left( W_\infty \varphi\right)}/{(1-\beta)} -  \sum_{t=0}^\infty \beta^t \Tr\left( P_t(\bar{\mathcal{F}}_t)\varphi\right)$. For \textbf{sys2}, we have
$$
\frac{\Tr\left( W_\infty \varphi\right)}{(1-\beta)} -  \sum_{t=0}^\infty \beta^t \Tr\left( P_t(\bar{\mathcal{F}}_t)\varphi\right) \leq 
\frac{\Tr\left( W_\infty \varphi\right)}{(1-\beta)} - \sum_{t=0}^{1000}\beta^t \Tr\left( P_t(\bar{\mathcal{F}}_t) \varphi \right)= 6.4305.
$$
That means if the cost of measurements $O\geq 6.4305$, the optimal measurement strategy is to not measure at all. When the cost of measurements $O=7$, the optimal measurement strategy is to not measure at all. The dynamic behavior of \textbf{sys2} in this case is plotted in \Cref{fig:DynamicBehaviorSys2O7}. We can see that the no measurement is made; the controls are open-loop over the whole period and approach zero as time goes by. The estimation error accumulates but is diminished by a Schur stable $A_2$.

\section{Conclusions}
We addressed the co-design and co-optimization of an infinite horizon LQG control problem with costly measurements. We answered the questions of when is the optimal time to measure and how to control when having controlled measurements. The problem is central in modern control applications, such as IoT, IoBT, and control applications incorporated with SaaSs and CaaSs. The answers provide guidelines on designing a more economically efficient controller in such application scenarios and offer different alternatives for the controller to implement the optimal control and measurement strategies. We realized that the formulation of the representation problem defined by \Cref{Eq:CostFunctionalRep} has a natural application in the self-triggered control paradigm. The case when the controls are fixed between two measurements is discussed, and the results in \Cref{Theorem:DynamicProgramming} can be extended directly in this case. We leave the characterization of the optimal control and measurement strategies for future work.

The paper also opens several other avenues for future endeavours. First, the formulation can be studied and analyzed in a continuous-time LQG setting. A continuous-time setting allows us to choose the waiting time for the next measurement in a continuous space, i.e., $T\in (0,\infty]$ but also brings more issues when one needs to find the optimal waiting time. Second, the costly yet controlled measurement setting can be studied in a nonlinear system or a general MDP framework. In this case, the difficulty in deriving an analytical characterization of the optimal control and measurement strategies becomes prohibitive \cite{huang2019continuous}. Alternatively, we can resort to learning approaches by leveraging results in \Cref{Theorem:DynamicProgramming} and let the controller learn when to observe. An similar example is given in \cite{BieRaj2020}. Third, the controlled and costly measurements problem in LQG games has been studied in \cite{huang2020cross,maity2017linear}. However, only symmetric information problem has been investigated in \cite{huang2020cross,maity2017linear}, i.e., players co-decide whether to measure and receive the same measurement. An asymmetric information problem, where each player chooses to measure independently from other players and hence may receive measurements at different time steps than other players, may lead to more interesting discussions. 

\appendix

\subsection{Proof of \Cref{Lemma:EquivalentRepresentation}}\label{Proof:EquivalentRepresentation}
\begin{proof}
We prove the lemma by showing that every $\pi\in \Pi$ can be represented by a strategy $\tilde{\pi}\in\tilde{\Pi}$ and vice versa, and the represented strategy produces the same cost. 

At stage $t = 0$, since the initial state is disclosed to the controller, $i_0$ will be zero in any optimal solutions. Note that $\bar{T}_k$ denotes the time instance when the $k$th measurement being made, i.e., $I_{T_k}$ satisfies the following conditions: $i_{T_k} = 1$ and there are $k$ number of ones in $I_{T_k}$. For any $k$, let $t = \bar{T}_k$. Then $({T_{k+1}},u_t,\cdots, u_{t+T_{k+1}-1}) = \tilde{\pi}(x_t)$ is generated based on current observation $x_t$. This can be represented by the following policy 
$$
\begin{aligned}
i_{t+\tau} &= \mu(\mathcal{F}_{t+\tau}) = 0,\ \ \ \textrm{for }\tau = 1,2,\cdots,T_{k+1}-1,\\
i_{t+T_{k+1}} &= \mu({\mathcal{F}_{t+T_{k+1}}}) =1.
\end{aligned}
$$
Since the state-measurement $(x_n,y_n)$ dynamics defined in \Cref{Eq:SystemDynamics} is Markovian,  the latest state information in $\mathcal{F}_{t+\tau}$ for $\tau = 1,2,\cdots,T_{k+1}$ is $x_t = x_{\bar{T}_k}$. Hence, the controls $(u_t,\cdots, u_{t+T_{k+1}-1})$ is constructed based on $x_t$. That means the controls $(u_t,\cdots, u_{t+T_{k+1}-1})$ generated by $\tilde{\pi}(x_t)$ can also be represented by $(\nu(\bar{\mathcal{F}}_{t}), \nu(\bar{\mathcal{F}}_{t+1}),\cdots,\nu(\bar{\mathcal{F}}_{t+T_{t+1}-1}))$.

Conversely, let $I_t$ be the measurement indicators generated by a strategy $\pi\in \Pi$. Let $t$ be a time instance such that $i_t=1$ is the $k$th ones in $I_t$ and $t+T_{k+1}$ be a time instance such that $i_{t+T_{k+1}} =1$ is the $k+1$th ones in $I_{t+T_{k+1}}$. Note that the measurement being used to generate $u_t,\cdots,u_{t+T_{k+1}-1}$ $i_{t+1},\cdots,i_{t + T_{k+1}}$  is simply $y_t =x_t$. Thus, the strategy $\pi$ can be represented by $ \tilde{\pi}(x_t) = (T_{k+1}, \nu(\bar{\mathcal{F}}_t), \cdots,\nu(\bar{\mathcal{F}}_{t+T_{k+1}-1}))$. Hence, the two strategies are equivalent representations of each other. It is easy to see that the strategy $\pi$ produces the same cost under \Cref{Eq:CostFunctional} as the represented strategy $\tilde{\pi}$ under \Cref{Eq:CostFunctionalRep}, and vice versa. In fact, given any sequence of measurement indicators with $i_0=0$ (it is assumed that the initial condition is known to the controller), we can write the last term of \Cref{Eq:SystemDynamics} as
$$
\sum_{t=0}^\infty \beta^t i_t O = \sum_{t=0}^{\infty} \beta^t \mathds{1}_{\{i_t=1\}}O = \sum_{k=1}^\infty \beta^{\bar{T}_k}O.
$$
This produces the last term of \Cref{Eq:CostFunctionalRep}.

\end{proof}

\subsection{Proof of \Cref{Theorem:DynamicProgramming}} \label{Proof:DynamicProgramming}
\begin{proof}
We prove the theorem by constructing a consolidated Markov decision process problem where the costs induced, the controls generated between observation epoch are considered as a stage cost and a concatenated control. Let $\bar{c}_k$ be the sum of the costs induced between the $k$th measurement and $k+1$th measurement by policy $\tilde{\pi}$. That is
$$
\bar{c}_k = \bar{c}\left(x_{\bar{T}_k},\tilde{\pi}(x_{\bar{T}_k})\right) =  \bar{c}(x(\bar{T}_k),T_{k+1},u_{\bar{T}_k: \bar{T}_k+ T_{k+1}-1}) = \mathbb{E}\left[ \sum_{t=\bar{T}_k}^{\bar{T}_k + T_{k+1} -1} \beta^{t-\bar{T}_k} (x_t' Q x_t + u_t' R u_t) \middle \vert x(\bar{T}_k), \pi\left(x(\bar{T}_k) \right) \right].
$$
By Fubini's Theorem and Markov property \cite{durrett2019probability}, we have
$$
\bar{c}\left(x,\tilde{\pi}(x) \right) = \bar{c}\left(x,T,u_{0:T-1} \right)= \sum_{t=0}^{T-1} \beta^t \mathbb{E}\left[ x_t' Q x_t + u_t' R u_t \middle \vert x_0 =x, \tilde{\pi}(x)\right]. 
$$
Then, $\tilde{F}(\tilde{\pi};x)$ can be reformulated as 
\begin{equation}\label{Eq:ReformedCostFunctional1}
\tilde{F}(\tilde{\pi};x) = \mathbb{E}\left[ \sum_{k=0}^\infty \beta^{\bar{T}_k} (\bar{c}_k + \beta^{T_k} O) \middle \vert x_0= x, \tilde{\pi}\right].
\end{equation}
A close look at \Cref{Eq:ReformedCostFunctional1} shows that this is a discounted cost discrete-time Markov decision process with discounted factor $\beta$, Markov state and Markovian actions given respectively by
$$
Z_k\coloneqq (x_{\bar{T}_k}, \tilde{T}_k), A_k=(T_{k+1}, u_{\bar{T}_k:\bar{T}_k +T_{k+1}-1}),
$$
where $\tilde{T}_k \coloneqq \bar{T}_k - k$, and running cost equal to
$$
C(Z_k,A_k) = \beta^{\tilde{T}_k}\left[ \bar{c}\left( x_{\bar{T}_k}, T_{k+1}, u_{\bar{T}_k:\bar{T}_k+ T_{k+1}-1} + \beta^{T_k} O \right)\right].
$$
That is, cost in \Cref{Eq:ReformedCostFunctional1} is given by
$$
\tilde{F}(\tilde{\pi};x) = \mathbb{E}\left[ \sum_{k=0}^\infty \beta^{k} C(Z_k,A_k) \middle \vert Z_0 = (x,0) \right].
$$
The consolidated formulation can be treated as a regular Markov decision problem and hence the results (mainly the results available to Polish spaces) can be derived from current Markov decision literature. By Theorem 6.2.7, the claims in \Cref{Theorem:DynamicProgramming} follow immediately.
\end{proof}

\subsection{Proof of \Cref{Lemma:InnerOptimalControl}}\label{Proof:InnerOptimalControl}
\begin{proof}
Given that $V(x) = x'P x + r$ and $T$ is fixed, the inner minimization problem in \Cref{Eq:DynamicProgramming} can be considered as an open-loop optimal control problem with cost functional 
\begin{equation}\label{Eq:InnerCostFunctional}
\inf_{u_0,\cdots,u_{T-1}} \mathbb{E} \left[ \sum_{t=0}^{T-1} \beta^t (x_t' Q x_t + u_t' R u_t) + \beta^T x_T'Px_T + \beta^T r + \beta^T O \middle \vert x_0 =x \right],
\end{equation}
and system dynamics \Cref{Eq:SystemDynamics}. Let $\bar{\mathcal{F}}_k$ be the information available at time $k$ defined in \Cref{Eq:InformationStructure} corresponding to the measurement sequence $i_1 =0, i_2 =0,\cdots,i_{T-1}=0, i_T =1$. Define the cost-to-go functional of the optimal control problem in \Cref{Eq:InnerCostFunctional} as
$$
f_k(x) = \mathbb{E} \left[ \sum_{t=k}^{T-1} \beta^t (x_t' Q x_t + u_t' R u_t) + \beta^T x_T'Px_T + \beta^T r + \beta^T O \middle \vert \bar{\mathcal{F}}_k \right].
$$
Define the optimal cost-to-go functional as $f_k^*(x) = \inf_{u_{k:T-1}} f_k(x)$. An application of dynamic programming techniques yields
$$
f_k^*(x) = \min_{u_k} \mathbb{E}\left[ \beta^k (x_k'Qx_k + u_k' R u_k') + f_{k+1}^*(x)\middle\vert \bar{\mathcal{F}}_k \right].
$$
By definition, $f_{T}(x)^* = f_{T}(x) = \mathbb{E}\left[  \beta^T(x_T' P x_T + r +  O) \middle \vert \bar{\mathcal{F}}_{T} \right]$. At $k = T-1$, we have
\begin{equation}\label{Eq:InnerDPEquation}
f_{T-1}^*(x) = \min_{u_{T-1}} \beta^{T-1}  \mathbb{E}\left[ (x_{T-1}' Q x_{T-1} +  u_{T-1}' R u_{T-1}) + \beta (x_T' P x_T) + \beta(r+O)\middle \vert \bar{\mathcal{F}}_{T-1} \right].
\end{equation}
Substituting $x_T = A x_{T-1} + B u_{T-1} + C w_{T-1}$ into $f_{T-1}^*$ and solving the minimization problem for $u^*_{T-1}$ yields
$$
u_{T-1}^* = -(R+ \beta B'L_0 B)^{-1} \beta B' L_0 A \hat{x}_{T-1},
$$
and applying $u^*_{T-1}$ in $f_{T-1}^*$ gives
$$
\begin{aligned}
f_{T-1}^* = &\beta^{T-1} \Big\{ \mathbb{E}\left[ x_{T-1}' (Q + \beta A'L_0 A - A' L_0 B \beta (R+\beta B'L_0 B)^{-1}\beta B'L_0 A)x_{T-1} \middle \vert \bar{\mathcal{F}}_{T-1}  \right]\\
& +\mathbb{E}\left[ (x_{T-1} - \hat{x}_{T-1})' A' L_0 B\beta (R+\beta B'L_0 B)^{-1} \beta B'L_0A (x_{T-1} -\hat{x}_{T-1})' \middle \vert \bar{\mathcal{F}}_{T-1}\right]\\
&+ \beta \mathbb{E}\left[ w_{T-1}' C'L_0 C w_{T-1}  \middle \vert \bar{\mathcal{F}}_{T-1}\right] + \beta (r+O)
\Big\}\\
=&  \beta^{T-1} \Big\{ \mathbb{E} \left[ x_{T-1}' L_1 x_{T-1} \middle \vert \bar{\mathcal{F}}_{T-1} \right] + \Tr\left( P_{T-1}(\bar{\mathcal{F}}_{T-1}) \varphi_{T-1}\right)  + \Tr\left( \Sigma_S C'L_0C \right)+ \beta (r+O) \Big\},
\end{aligned}
$$
where $L_1$ agrees with \Cref{Eq:RiccatiEquation} and $\varphi_{T-1}$ agrees with \Cref{Eq:EstimationErrorCoefficient}. The cases for $k= T-2$ till $k=0$ can be conducted similarly through induction using the inner dynamic programming equation \Cref{Eq:InnerDPEquation}. 
\end{proof}

\subsection{Proof of \Cref{Lemma:ValueFunctionCharacterization}}\label{Proof:ValueFunctionCharacterization}
\begin{proof}
From Theorem 4 in Section 9.3.2 of \cite{kushner_introduction_1971}, we know that if $(A,B)$ is controllable, $L_0,L_1,\cdots, L_T$ generated by the Riccati equation \Cref{Eq:RiccatiEquation} is non-decreasing, i.e., $L_0 \leq L_1\leq\cdots \leq L_T$. Note that $L_0 = P$. For any $T\in\mathbb{N}$, $L_T = P$ implies $L_0 = L_1 = \cdots = L_T=P$. That means the dynamic programming equation \Cref{Eq:DPInnerSolved} holds if and only if $P$ satisfies the algebraic Riccati equation \Cref{Eq:AlgebraicRiccatiEquation}. According to Theorem 4 in Section 9.3.2 of \cite{kushner_introduction_1971}, the algebraic Riccati equation admits a unique positive definite solution if  $(A,L)$ is observable. Since now we have $L_0= L_1 = \cdots =L_T = P$, $\varphi_t = \varphi$ in \Cref{Eq:DPInnerSolved} for $t = 0,\cdots,T-1$, where
$
\varphi= A' PB \beta (R+\beta B' P B)^{-1}\beta B' PA.
$
With $P$ be characterized, we can write \Cref{Eq:DPInnerSolved} as
\begin{equation}\label{Eq:DPQuadraticSolved}
x'Px + r  =  x' P x  + \inf_{T\in\mathbb{N}} \left\{\sum_{t=0}^{T-1} \beta^t \Tr\left( P_{t}(\bar{\mathcal{F}}_t)  \varphi\right) + \sum_{t=1}^{T} \beta^t \Tr\left( \Sigma_S C' PC\right) + \beta^T (r+O) \right\}.
\end{equation}
It is easy to see that $r$ is the solution of the fixed-point equation defined in \Cref{Eq:FixedPointEquationForr}, whose existence and uniquess are guaranteed by Banach fixed-point theorem \cite{kreyszig1978introductory}.
\end{proof}

\subsection{Proof of \Cref{Theorem:CharacterizationRandOptimalObservation}}\label{Proof:CharacterizationRandOptimalObservation}
\begin{proof}
Define a function of $T$ as 
$$
f(T) = \sum_{t=0}^{T-1} \beta^t\Tr\left( P_t(\mathcal{F}_t)\varphi\right) + \sum_{t=1}^T  \beta^t \Tr\left(\Sigma_S C'PC \right)+ \beta^T(r+O).
$$
Note that $f(T)$ is also depends on $r$. Here, we write $f(T)$ for national simplicity. The fixed-point equation \Cref{Eq:FixedPointEquationForr} can then be written as $r = \inf_{T\in \mathbb{N}} f(T)$. To find $T^*$ that minimizes $f(T)$, we calculate
\begin{equation}\label{Eq:DifferenceofF}
\begin{aligned}
f(T+1) - f(T) &= \beta^T \Tr\left( P_T(\bar{\mathcal{F}}_T) \varphi\right) + \beta^{T+1}\Tr\left( \Sigma_S C'PC\right) + (\beta^{T+1} - \beta^T)(r+O)\\
&=\beta^T \left[ \Tr\left( P_{T}(\bar{\mathcal{F}}_T)\varphi \right) + \beta \Tr\left( \Sigma_S C'PC \right)- (1- \beta) (r+O)\right]\\
&= \beta^T \left[ \Tr\left( \sum_{t=0}^{T-1} (A')^t C\Sigma_S C A^{t} \varphi  \right)+ \beta \Tr\left( \Sigma_S C'PC\right) - (1- \beta) (r+O)\right],
\end{aligned}
\end{equation}
where the last equality is obtained using the fact that $P_{T}(\bar{\mathcal{F}}_T) =\sum_{t=0}^{T-1} (A')^t C\Sigma_S C A^{T-1}$. Note that the term in the square brackets in \Cref{Eq:DifferenceofF} 
$$
h(T) = \Tr\left( \sum_{t=0}^{T-1} (A')^t C'\Sigma_S C A^{t} \varphi \right) + \beta \Tr\left( \Sigma_S C'PC\right) - (1- \beta)(r+O)
$$
is strictly increasing in $T$. Thus, if $h(1)> 0$, then $h(T) > 0$ for all $T >1$. If $h(\infty)$ exists and $h(\infty)\leq 0$, $h(T)<0$ for all $T<\infty$. Otherwise, there exists a $T^*$ such that $h(T^*-1)<=0$ and $h(T^* )> 0$. Since $h(T)$ is strictly increasing in $T$, we have $h(T)<0$ for all $T <T^*-1$ and $h(T)>0$ for all $T>T^*$. Since $f(T+1) - f(T) = \beta^T h(T)$, we can see that if $h(1) > 0$, the optimal waiting time for next observation is $T^*=1$; If $h(\infty)<=0$, the optimal policy is to not measure at all; If there exists a $T^*$ such that $h(T^*-1)<=0$ and $h(T^*)>0$, the optimal measurement policy is $T^*$.

First, we discuss the case when $h(1)>=0$. We have $f(T+1) - f(T) >0$ for all $T$. Thus, $T^* =1$, which means the optimal measurement policy is to measure every time. By \Cref{Eq:FixedPointEquationForr}, we have
$$
r  =   \beta \Tr\left( \Sigma_S C' PC \right)+ \beta (r+O),
$$
which gives $r = \frac{\beta}{1-\beta} \Tr\left( \Sigma_S C' PC\right) + \frac{\beta}{1- \beta} O$. Also note that $h(1)>0$ implies that
$$
\Tr\left(  C\Sigma_S C \varphi\right)  + \beta \Tr\left( \Sigma_S C'PC \right)- (1- \beta)(r+O) > 0.
$$
Using the value of $r$, we have 
$$
\begin{aligned}
(1-\beta)( \frac{\beta}{1-\beta} \Tr\left( \Sigma_S C' PC\right) + \frac{\beta}{1- \beta} O +O) &< \Tr\left( C\Sigma_S C \varphi\right) + \beta \Tr\left(\Sigma_S C'PC\right)\\
O&< \Tr\left( C\Sigma_S C\varphi\right).
\end{aligned}
$$
Thus, we can say that when $O<\Tr\left( C'\Sigma_S C\varphi\right)$, the value function is $V(x) = x'Px + r$ where $P$ is the solution of \Cref{Eq:AlgebraicRiccatiEquation} and $r = \frac{\beta}{1-\beta} \Tr\left( \Sigma_S C'PC\right) + \frac{\beta}{1-\beta} O$; the optimal measurement policy is to observe every time, $T^* =1 $.

Second, we discuss the case when there exists a $T^*$ such that $h(T^*-1)<=0$ and $h(T^*) > 0$. In this case, the optimal measurement policy is $T^*$. By equation \Cref{Eq:FixedPointEquationForr}, we have
$$
r = \sum_{t=0}^{T^*-1} \beta^t\Tr\left( P_t(\bar{\mathcal{F}}_t)\varphi\right) + \sum_{t=1}^{T^*}  \beta^t \Tr\left(\Sigma_S C'PC \right)+ \beta^{T^*}(r+O),
$$
which yields
\begin{equation}\label{Eq:ValueOfRWhenTstar}
r = \frac{\sum_{t=0}^{T^*-1} \beta^t \Tr\left( P_{t}(\bar{\mathcal{F}}_t)\varphi\right)}{1-\beta^{T^*}} + \frac{\beta}{1-\beta} \Tr\left( \Sigma_S C' PC \right)+ \frac{\beta^{T^*}}{1-\beta^{T^*}} O.
\end{equation}
Besides, $h(T^*-1)< 0$ and $h(T^*)\geq 0$ yields
{\small$$
\begin{aligned}
 \frac{\Tr\left( P_{T^*-1}(\bar{\mathcal{F}}_{T^*-1})\varphi\right)}{1-\beta} + \frac{\beta}{1-\beta} \Tr\left( \Sigma_S C'PC\right) -r &\leq  O < \frac{\Tr\left( P_{T^*}(\bar{\mathcal{F}}_{T^*})\varphi\right)}{1-\beta} + \frac{\beta}{ 1- \beta} \Tr\left( \Sigma_S C'PC \right)- r\\
 \frac{1-\beta^{T^*}}{1-\beta} \Tr\left( P_{T^*-1}(\bar{\mathcal{F}}_{T^*-1}) \varphi\right) - \sum_{t=0}^{T^* -1} \beta^t \Tr\left( P_t(\bar{\mathcal{F}}_t) \varphi\right) &\leq O < \frac{1-\beta^{T^*}}{1-\beta} \Tr\left( P_{T^*}(\bar{\mathcal{F}}_{T^*}) \varphi\right) - \sum_{t=0}^{T^* -1} \beta^t \Tr\left( P_t(\bar{\mathcal{F}}_t) \varphi\right)\\
   \end{aligned}
$$
$$
\begin{aligned}
 \sum_{t=0}^{T^*-1} \beta^t  \Tr\left(\left[ P_{T^*-1}(\bar{\mathcal{F}}_{T^*-1}) - P_t(\bar{\mathcal{F}}_t)  \right]\varphi\right) &\leq O < \sum_{t=0}^{T^*-1} \beta^t  \Tr\left(\left[ P_{T^*}(\bar{\mathcal{F}}_{T^*}) - P_t(\bar{\mathcal{F}}_t)  \right]\varphi\right)\\
 \sum_{t=0}^{T^*-2} \beta^t  \Tr\left(\left[ P_{T^*-1}(\bar{\mathcal{F}}_{T^*-1}) - P_t(\bar{\mathcal{F}}_t)  \right]\varphi\right) &\leq O < \sum_{t=0}^{T^*-1}\beta^t \Tr\left(\left[ P_{T^*}(\bar{\mathcal{F}}_{T^*}) - P_t(\bar{\mathcal{F}}_t)  \right]\varphi\right)\\
 \sum_{t=0}^{T^*-2} \beta^t \Tr\left( \left[ \sum_{t=0}^{T^*-2} (A')^{t} C' \Sigma_s C A^t -\sum_{\tau=0}^{t-1} (A')^\tau C' \Sigma_s C A^\tau \right]\varphi\right) & \leq O  <  \sum_{t=0}^{T^*-1} \beta^t \Tr\left( \left[ \sum_{t=0}^{T^*-1} (A')^{t} C' \Sigma_s C A^t -\sum_{\tau=0}^{t-1} (A')^\tau C' \Sigma_s C A^\tau \right]\varphi\right) \\
  \end{aligned}
$$
$$
\begin{aligned}
 \sum_{t=0}^{T^*-2} \beta^t \Tr\left( \left[ \sum_{\tau =t}^{T^*-2} (A')^\tau C'\Sigma_S C A^\tau \right] \varphi\right) &\leq O < \sum_{t=0}^{T^*-1} \beta^t \Tr\left( \left[ \sum_{\tau=t}^{T^*-1} (A')^\tau C' \Sigma_S C A^\tau \right]\varphi\right)\\
 \sum_{t=0}^{T^*-2} \frac{1-\beta^{t+1}}{1-\beta} \Tr\left( (A')^tC' \Sigma_S CA^t \varphi\right) &\leq O < \sum_{t=0}^{T^*-1} \frac{1-\beta^{t+1}}{1-\beta} \Tr\left( (A')^tC' \Sigma_S CA^t \varphi\right).
\end{aligned}
$$}
Hence, we can conclude that given the cost of measurement $O$, there optimal measurement waiting time is $T^*$ that satisfies $ \sum_{t=0}^{T^*-2} \frac{1-\beta^{t+1}}{1-\beta} \Tr\left( (A')^tC' \Sigma_S CA^t \varphi\right) \leq O < \sum_{t=0}^{T^*-1} \frac{1-\beta^{t+1}}{1-\beta} \Tr\left( (A')^tC' \Sigma_S CA^t \varphi\right)$. The value function is $V(x) = x'Px + r$ where $P$ is the solution of \Cref{Eq:AlgebraicRiccatiEquation} and $r$ is given by \Cref{Eq:ValueOfRWhenTstar}.

Now it remains to discuss $h(T)$ as $T$ goes to infinity. We first introduce the claim that shows the boundedness of $\sum_{t=0}^{T-1} \Tr\left( (A')^t C'\Sigma_S C A^t \varphi\right)$.

\begin{claim}
Suppose $\varphi$ is positive definite. The sum $\sum_{t=0}^{T-1} \Tr\left( (A')^t C'\Sigma_S C A^t \varphi\right)$ will converge if and only if all eigenvalues of $A$ have magnitude strictly smaller than $1$. 
\end{claim}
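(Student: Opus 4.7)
The plan is to bound the summand $\Tr\bigl((A')^t C'\Sigma_S C A^t \varphi\bigr)$ above and below by a positive multiple of $\|A^t\|_F^2$, then reduce the question to whether $\sum_t \|A^t\|_F^2$ converges, which is controlled by $\rho(A)$. Write $\Phi = C'\Sigma_S C$, which is positive definite by the standing assumption on the noise covariance, and recall that $\varphi$ is positive definite by hypothesis. The cyclic property of trace together with the elementary fact that $\lambda_{\min}(M)\,\Tr(N) \le \Tr(MN) \le \lambda_{\max}(M)\,\Tr(N)$ whenever $M, N$ are symmetric and $N \succeq 0$, applied once with $M = \varphi$ and again with $M = \Phi$, yields
\[
\lambda_{\min}(\Phi)\,\lambda_{\min}(\varphi)\,\|A^t\|_F^2 \;\le\; \Tr\bigl((A')^t \Phi A^t \varphi\bigr) \;\le\; \lambda_{\max}(\Phi)\,\lambda_{\max}(\varphi)\,\|A^t\|_F^2,
\]
since $\Tr(A^t (A')^t) = \|A^t\|_F^2$. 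Thus the convergence/divergence of the series is equivalent to that of $\sum_t \|A^t\|_F^2$.

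For sufficiency, assume every eigenvalue of $A$ has magnitude strictly less than $1$. Then $\rho(A)<1$ and by Gelfand's formula, for any $\rho$ with $\rho(A)<\rho<1$ there exists $M>0$ such that $\|A^t\|\le M\rho^t$ for all $t$. Using $\|A^t\|_F^2 \le q\,\|A^t\|^2 \le qM^2\rho^{2t}$, where $q$ is the state dimension, the upper bound gives a convergent geometric series, so the original sum converges.

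For necessity, suppose $A$ has an eigenvalue $\lambda$ with $|\lambda|\ge 1$, so $\rho(A) \ge 1$. Since the spectral radius is a lower bound for any submultiplicative norm, $\|A^t\|_F^2 \ge \|A^t\|^2 \ge \rho(A^t)^2 = \rho(A)^{2t}\ge 1$ for all $t$. Combined with the lower sandwich bound, every term of the series is at least $\lambda_{\min}(\Phi)\,\lambda_{\min}(\varphi)>0$, a positive constant independent of $t$, so the series diverges. The main delicate point is making sure the lower bound produces a strictly positive constant, which is why both $\Phi$ and $\varphi$ must be strictly positive definite; the upper bound using Schur stability is routine, as is the spectral-radius lower bound on $\|A^t\|$.
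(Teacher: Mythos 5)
Your proof is correct, and it takes a genuinely different (and in one respect cleaner) route than the paper's. The paper rewrites the summand \emph{exactly} as $\bigl\Vert (\varphi^{-1/2} A \varphi^{1/2})^t \bigr\Vert^2$ for a weighted Frobenius norm built from $\varphi^{1/2} C'\Sigma_S C \varphi^{1/2}$, then invokes Gelfand's formula and the root test; this settles $\rho(A)<1$ and $\rho(A)>1$ but is inconclusive at $\rho(A)=1$, so the paper adds a separate argument with a unit eigenvector of $\varphi^{-1/2}A\varphi^{1/2}$ to show the terms stay bounded away from zero in the boundary case. You instead replace the exact identity by the two-sided sandwich $\lambda_{\min}(C'\Sigma_S C)\,\lambda_{\min}(\varphi)\,\Vert A^t\Vert_F^2 \le \Tr\bigl((A')^t C'\Sigma_S C A^t \varphi\bigr) \le \lambda_{\max}(C'\Sigma_S C)\,\lambda_{\max}(\varphi)\,\Vert A^t\Vert_F^2$, which is weaker than an identity but exactly what a convergence question needs, and it lets you handle all of $\rho(A)\ge 1$ uniformly via $\Vert A^t\Vert_F \ge \rho(A)^t \ge 1$, dispensing with the separate boundary-case argument. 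Both proofs use strict positive definiteness of $\varphi$ and of $C'\Sigma_S C$ in the same essential place, namely to obtain a strictly positive constant in the divergence direction; the paper's version additionally needs $\varphi^{-1/2}$ to exist for its similarity transformation, which your version avoids. The sandwich inequality itself is justified correctly (two applications of $\lambda_{\min}(M)\Tr(N)\le\Tr(MN)\le\lambda_{\max}(M)\Tr(N)$ for symmetric $M$ and $N\succeq 0$, using cyclicity of the trace), so I see no gap.
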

\begin{proof}
Define a matrix norm $\Vert \cdot \Vert$ as 
$$
\Vert M \Vert = \sqrt{\Tr\left( M' \varphi^{1/2}C'\Sigma_S C\varphi^{1/2}M\right)}.
$$
The norm is well defined since $C'\Sigma_S C$ and $\varphi$ are positive definite. Note that 
$$
\begin{aligned}
\Tr\left(  (A')^tC'\Sigma_SCA^t \varphi\right)&= \Tr\left(\varphi^{1/2}  (A')^t C'\Sigma_S CA^t \varphi^{1/2}\right)\\
&=\Tr\left( [(\varphi^{-1/2} A \varphi^{1/2})']^t \varphi^{1/2} C'\Sigma_S C\varphi^{1/2} [\varphi^{-1/2} A \varphi^{1/2}]^t \right)\\
&=\left\Vert \left(\varphi^{-1/2} A \varphi^{1/2}\right)^t \right\Vert^2.\\
\end{aligned}
$$
Note that $\varphi^{-1/2} A \varphi^{1/2}$ has the same eigenvalues as $A$. With Gelfand's formula \cite{lax_linear_2007}, one has
$$
\rho(M) = \lim_{k\rightarrow \infty} \Vert M^k \Vert^{1/k},
$$
where $\rho(M)$ is the spectral radius of matrix $M$. Using Gelfand's formula, one can shows by the root test that the sum $$\sum_{t=0}^{T-1} \Tr\left((A')^t C'\Sigma_S C A^t \varphi\right) = \sum_{t=0}^{T-1} \left\Vert \left(\varphi^{-1/2} A \varphi^{1/2}\right)^t \right\Vert^2$$
will diverge when $A$ has any eigenvalue of magnitude strictly greater than $1$ and will converge when all eigenvalues of $A$ have magnitude strictly less than $1$ ($A$ is stable). 

When $A$ has an eigenvalue of maximal magnitude $1$, then the sum also diverges. To see this, if $v$ is a unit eigenvector of $\varphi^{-1/2} A \varphi^{1/2}$ associated with eigenvalue $\lambda$ with $|\lambda|=1$, then we have
$$
\begin{aligned}
\left\Vert \left(\varphi^{-1/2} A \varphi^{1/2}\right)^t \right\Vert^2 &\geq v' [(\varphi^{-1/2} A \varphi^{1/2})']^t \varphi^{1/2} C'\Sigma_S C\varphi^{1/2} [\varphi^{-1/2} A \varphi^{1/2}]^t v\\
&= |\lambda|^{2t} v' \varphi^{1/2} C'\Sigma_S C\varphi^{1/2} v >0,
\end{aligned}
$$
which indicates that the sequence being added has a positive lower bound. Hence, the sum necessarily diverges. This completes our proof.
\end{proof}
Note that even if $\varphi$ is not positive definite, $\sum_{t=0}^{T-1} \Tr\left( (A')^t C'\Sigma_S C A^t \varphi\right)$ has a limit when $A$ has only eigenvalues with magnitude strictly less than $1$. From \cite{kushner_introduction_1971}, we know that for $A$ stable, the Observability Gramian
$$
W_\infty = \sum_{t=0}^\infty (A')^t C'\Sigma_S C A^t
$$
is the unique solution of the Lyapunov equation
$$
W_\infty - A' W_\infty A = C'\Sigma_S C.
$$
Hence, $\sum_{t=0}^{T-1} \Tr\left( (A')^t C'\Sigma_S C A^t \varphi\right) \rightarrow  \Tr\left( W_\infty \varphi\right)$. 

From the discussion above, we can conclude that when $A$ is unstable and $\varphi$ is positive definite, the optimal waiting time for next measurement $T^*$ is bounded $T^* <\infty$. That means when $A$ is unstable, the controller has to measure once in a finite period of time. When $A$ is table,
$$
h(\infty)= \Tr\left( W_\infty \varphi\right) + \beta \Tr\left( \Sigma_S C'PC\right) - (1-\beta)(r+O).
$$
We know that if $h(\infty) <= 0$, the best measurement policy is to not measure at all, i.e., $T^* =\infty$. In this case, we have
$$
0<\frac{\Tr\left( W_\infty \varphi\right)}{1-\beta} -  \sum_{t=0}^\infty \beta^t \Tr\left( P_t(\bar{\mathcal{F}}_t) \varphi\right) \leq O.
$$

Thus, we can conclude that if $A$ is stable and $O\geq \frac{\Tr\left( W_\infty \varphi\right)}{1-\beta} -  \sum_{t=0}^\infty \beta^t \Tr\left( P_t(\bar{\mathcal{F}}_t)\varphi\right)$, the best strategy is to not measure at all, i.e., $T^*=\infty$. The value function then is
$$
V(x) = x'Px +r,
$$
where $P$ is the solution of \Cref{Eq:AlgebraicRiccatiEquation} and $r= \sum_{t=0}^\infty \beta^t \Tr\left( P_t(\bar{\mathcal{F}}_t) \varphi\right) + \frac{\beta}{1-\beta} \Tr\left( \Sigma_S C'PC\right)$.
\end{proof}


%

\ifCLASSOPTIONcaptionsoff
  \newpage
\fi



\bibliographystyle{IEEEtran}
\bibliography{ControlledMeasurement.bib}
\end{document}